\documentclass[journal]{IEEEtran}
\usepackage{xcolor,soul,framed} 
\colorlet{shadecolor}{yellow}
\usepackage[pdftex]{graphicx}
\graphicspath{{../pdf/}{../jpeg/}}
\DeclareGraphicsExtensions{.pdf,.jpeg,.png}
\usepackage[cmex10]{amsmath}
\usepackage{array}
\usepackage{subcaption}
\usepackage{amssymb}
\usepackage{mdwmath}

\usepackage{mdwtab}
\usepackage{eqparbox}
\usepackage{url}
\usepackage{enumerate}
\usepackage{enumitem}
\usepackage{amsfonts}
\hyphenation{op-tical net-works semi-conduc-tor}
\usepackage[english]{babel}
\usepackage[utf8]{inputenc}
\usepackage{bm}
\usepackage[linesnumbered,ruled]{algorithm2e}
\usepackage{tabularx}
\usepackage{nicematrix}
\usepackage{blkarray}

\usepackage[noend]{algpseudocode}
\usepackage{mathtools}
\usepackage{nomencl}
\usepackage{blindtext}
\usepackage{amsmath}
\usepackage[colorinlistoftodos]{todonotes}
\usepackage{color}
\usepackage{color,soul}
\usepackage{xcolor}
\usepackage{multirow}
\usepackage{graphicx}
\usepackage{epstopdf}
\usepackage[T1]{fontenc}
\usepackage{textcase}
\usepackage{array}
\usepackage{mdwmath}
\usepackage{mdwtab}
\usepackage{eqparbox}
\usepackage{enumerate}
\usepackage{amsfonts}
\usepackage[ruled]{algorithm2e}
\usepackage{booktabs}
\usepackage{comment}
\usepackage{amsmath}
\usepackage{mathtools}
\usepackage{amsthm}
\newtheorem{remark}{Remark}
\usepackage{cite}
\usepackage{url}
\usepackage{setspace}
\newtheorem{definition}{Definition}
\newtheorem{proposition}{Proposition}

\newtheorem{lemma}{Lemma}

\usepackage{etoolbox}
\makeatletter
\patchcmd{\@makecaption}
  {\scshape}
  {}
  {}
  {} 
\makeatother
\usepackage{hyperref}

\title{Robust Moving Target Defence Against False Data Injection Attacks in Power Grids}

\author{Wangkun Xu, \IEEEmembership{Student Member,~IEEE},
        Imad M. Jaimoukh, and
        Fei Teng, \IEEEmembership{Senior Member,~IEEE}
        
\thanks{This work was supported by EPSRC under Grant EP/W028662/1 and by The Royal Society under Grant RGS/R1/211256. (\textit{Corresponding author: Fei Teng})

The authors are with the Department of Electrical and Electronic Engineering, Imperial College London, London, SW7 2AZ, U.K. }} 

\begin{document}

\renewcommand{\baselinestretch}{1}
\newcommand{\squeezeup}{\vspace{0mm}}
\markboth{This paper has been accepted by IEEE Trans. on information forensics and security. Copyright of the paper is reserved by IEEE.}%
{Shell \MakeLowercase{\textit{et al.}}: Bare Demo of IEEEtran.cls for IEEE Journals}

\maketitle

\begin{abstract}
Recently, moving target defence (MTD) has been proposed to thwart false data injection (FDI) attacks in power system state estimation by proactively triggering the distributed flexible AC transmission system (D-FACTS) devices. One of the key challenges for MTD in power grid is to design its real-time implementation with performance guarantees against unknown attacks. Converting from the noiseless assumptions in the literature, this paper investigates the MTD design problem in a noisy environment and proposes, for the first time, the concept of robust MTD to guarantee the worst-case detection rate against all unknown attacks. We theoretically prove that, for any given MTD strategy, the minimal principal angle between the Jacobian subspaces corresponds to the worst-case performance against all potential attacks. Based on this finding, robust MTD algorithms are formulated for the systems with both complete and incomplete configurations. Extensive simulations using standard IEEE benchmark systems demonstrate the improved average and worst-case performances of the proposed robust MTD against state-of-the-art algorithms. All codes are available at \url{https://github.com/xuwkk/Robust_MTD}.

\end{abstract}

\begin{IEEEkeywords}
Cyber physical power system, false data injection attacks, moving target defence, principal angles and vectors.
\end{IEEEkeywords}

\IEEEpeerreviewmaketitle

\section{Introduction}

\subsection{Background}

\IEEEPARstart{T}{HE EMERGING} implementation of information techniques has reformed the power gird into a complex cyber-physical power system (CPPS), where the two-way real-time communication among multiple parties raises new risks in the grid
\cite{ten2008vulnerability}. Musleh \textit{et al.} \cite{musleh2019survey} reviewed seven recent cyber attacks in energy industry and spotted the related vulnerabilities in both physical and cyber layers. Recently, false data injection (FDI) attacks against power system state estimation (SE) have been developed by intruding through the Modbus/TCP protocol without being noticed by the bad data detector (BDD) at the control centre \cite{liu2011false,hug2012vulnerability, rahman2013false, ahmed2019unsupervised}. As accurate state estimation is crucial for energy management system (EMS) activities, such as generator dispatch, contingency analysis, and fault diagnosis, states falsified by FDI attacks can result in erroneous control actions, causing economic losses, system instability, and safety violation \cite{gomez2018electric, tajer2017false, xie2011integrity}. 

As the power system operates quasi-statically, the intruders have enough time to learn the system parameters and prepare FDI attacks \cite{kim2014subspace, yu2015blind,lakshminarayana2021data}. As a result, it is crucial to invalidate the attacker's knowledge by proactively changing the system configuration. Moving target defence (MTD), which is conceptualised first for information technology security, utilises this proactive defence idea \cite{cho2020toward}. With the distributed flexible AC transmission system (D-FACTS) devices, the control centre can alter the reactances of the transmission lines to physically change the system parameters that 
are unknown to the attackers.

\subsection{Related Work}

Initially, MTD research involves using random placement and reactance perturbations to expose FDI attacks \cite{morrow2012topology, davis2012power, rahman2014moving}. However, it has been shown that the so-called `naive' applications cannot guarantee an effective detection on stealthy FDI attacks. Therefore, \cite{liu2018reactance} and \cite{zhang2019analysis} demonstrate that the effectiveness of MTD depends on the rank of the composite pre- and post- MTD measurement matrices. Furthermore, Liu, \textit{et al.} \cite{liu2020optimal} and Zhang \textit{et al} \cite{zhang2021strategic} investigate the D-FACTS devices placement in the planning stage to maximise the effectiveness while minimising the investment budget. The authors in \cite{lakshminarayana2021cost} analyse the effectiveness of the MTD using the minimal principal angle metric and numerically show the relationship between the angle and the average detection rate, which can be used to design the MTD. Liu, \textit{et al.} \cite{liu2019joint} extends the MTD strategy in \cite{liu2018reactance} with sensor protections and Tian, \textit{et al.} \cite{tian2019moving} applies MTD to detect Stuxnet-like attack. Moreover, Higgins \textit{et.al.} \cite{higgins2020stealthy} suggests to perturb the reactance through Gaussian watermarking to prevent the attacker from inferring the new system parameters. However, majority of the above literature studies the effectiveness of MTD under DC and noiseless assumptions. As the detection rate of MTD is limited by the ratio between the attack strength and the noise level \cite{li2019feasibility}, there is no guarantee on the detection performance of existing MTD strategies against the unseen attacks in a noisy environment.

\subsection{Contributions}

With the attackers becoming more resourceful and intelligent, it is critical for the system operator to determine and guarantee the lowest detection rate of MTD against all unknown attacks. In this context, this paper introduces the concept of \textit{robust MTD}, which aims to guarantee the worst-case MTD effectiveness against a given level of attack strength under noisy environment. The main contributions of this paper are summarised as follows.

\begin{itemize}
    \item 
     This paper, for the first time, proposes the concept of robust MTD in a noisy environment. We theoretically prove that, for any given grid topology and MTD strategy, the minimal principal angle between the pre- and post-MTD Jacobian subspaces is directly linked with the worst-case performance against all potential attacks, which can be used as a new metric to represent the MTD effectiveness. 
    \item A novel MTD design algorithm is formulated to improve the worst-case detection rate by maximising the minimal principal angle under the complete grid configuration. We then demonstrate that the worst-case detection rate of the grid with incomplete configuration cannot be improved. Therefore, an iterative algorithm is formulated to maximise the minimal nonzero principal angle while limits the chance of attacking on the subspace that cannot be detected.
    \item Numerical simulations on IEEE case-6, 14, and 57 systems demonstrate the improved detection performance of robust MTD algorithms against the worst-case, random, and single-state attacks, under both simplified and full AC models. 
\end{itemize}

The rest of the paper is organised as follows. The preliminaries are summarised in Section \ref{sec:preliminary}; Analysis on MTD effectiveness is presented in Section \ref{sec:analysis_mtd_eff}; Problem formulation and proposed robust algorithms are presented in Section \ref{sec:robust mtd algorithm};
Case studies are given in Section \ref{sec:simulation} with conclusions in Section \ref{sec:conclusion}.

\section{Preliminaries}\label{sec:preliminary}

\subsection{Notations}

In this paper, vectors and matrices are represented by bold lowercase and uppercase letters, respectively. The $p$-norm of $\bm{a}$ is written as $\|\bm{a}\|_p$. The column space of $\bm{A}$ is $\mathcal{A}=\text{Col}(\bm{A})$. The kernel of a matrix $\bm{A}$ is represented as $\text{Ker}(\bm{A})$. The rank operator is written as \text{rank}($\bm{A}$). $\bm{P}_{\bm{A}} = \bm{A}(\bm{A}^T\bm{A})^{-1}\bm{A}^T$ represents the orthogonal projector to $\text{Col}(\bm{A})$ while $\bm{S}_{\bm{A}} = \bm{I} - \bm{P}_{\bm{A}}$ represents the orthogonal projector to $\text{Ker}(\bm{A}^T)$. The set of singular values is $\sigma(\bm{A}) = \{\sigma_1(\bm{A}), \sigma_2(\bm{A}),\dots,\sigma_{\min\{m,n\}}(\bm{A})\}$. The spectral norm is $\|\bm{A}\|_2 = \max_i \sigma_i(\bm{A})$ and the Frobenius norm is $\|\bm{A}\|_F$. We use the symbol $(\cdot)'$ to indicate the quantities after MTD and $(\cdot)_a$ to indicate the quantities after the attack. The matrix operator $\circ$ represents the Hadamard product. Other symbols and operators are defined in the paper whenever appropriate.

\subsection{System Model and State Estimation}

\label{sec:system_model}

\begin{figure}[]
    \centering
    \includegraphics[width=0.42\textwidth]{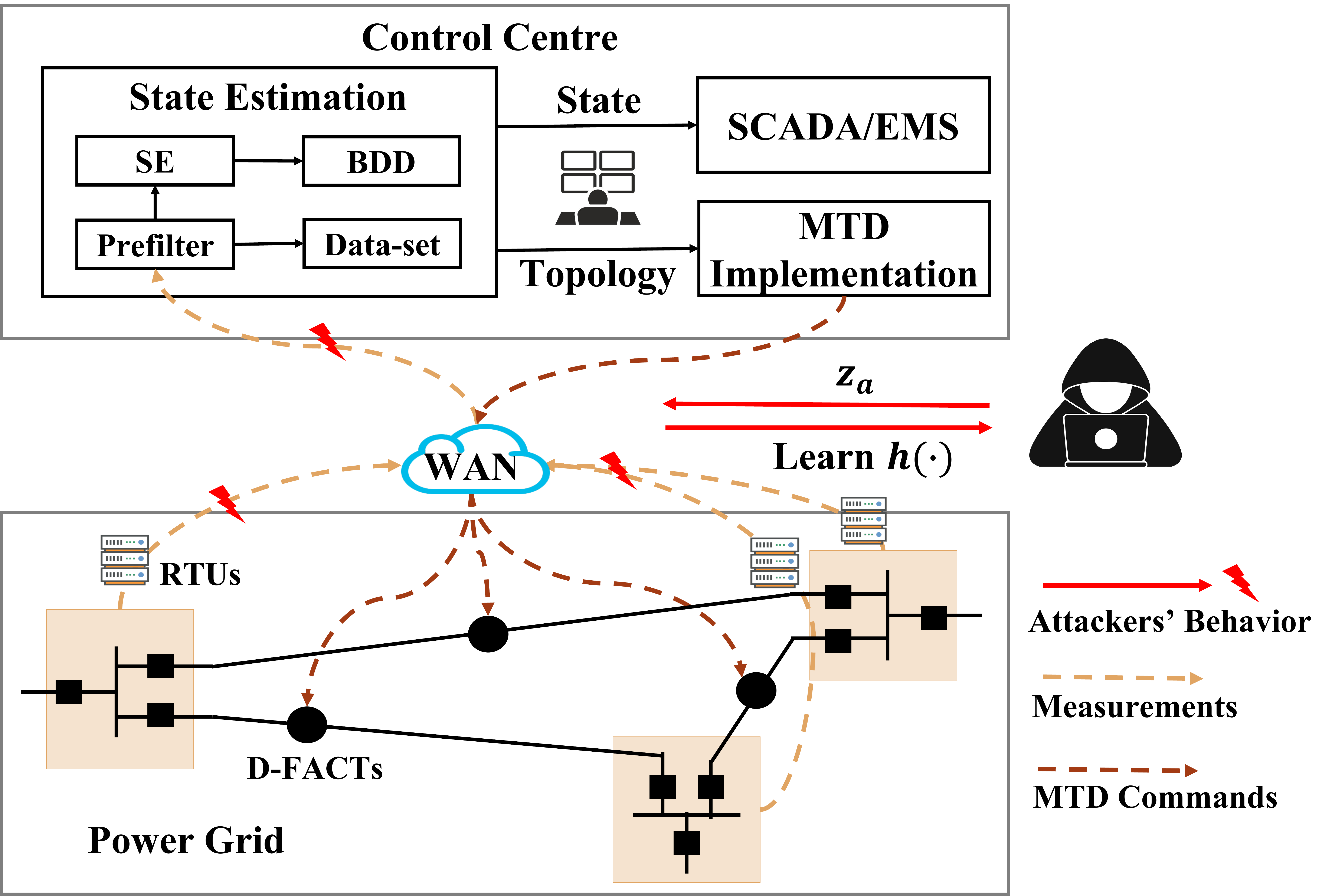}
    \caption{EMS with injection attacks and MTD in CPPS.}
    \label{fig:ems}
\end{figure}

The power system can be modelled as a graph $\mathcal{G}(\mathcal{N},\mathcal{E})$ with $|\mathcal{N}| = n+1$ number of buses and $|\mathcal{E}| = m$ number of branches. For each bus, we denote its complex voltage as ${\bm{\nu}}=\bm{v}\angle\bm{\theta}$; and for each branch, we denote the admittance as $\bm{y}=\bm{g}+j\bm{b}$. The power balances can be modelled by a set of nonlinear equations $\bm{z}=\bm{h}(\bm{\nu})+\bm{e}$ where $\bm{z}\in\mathbb{R}^{p}$ is the sensor measurement; $\bm{h}(\cdot)\in\mathbb{R}^{p}$ is the power balancing equation; $\bm{\nu}\in\mathbb{R}^{2n+1}$ is the system state composing of voltage magnitudes at all bus and phase angles at non-reference buses.  The measurement noise vector $\bm{e}\sim\mathcal{N}(\bm{0},\bm{R})$ follows an independent Gaussian distribution with diagonal covariance matrix $\bm{R}=\text{diag}([\sigma_1^2,\sigma_2^2,\cdots,\sigma_p^2])$. 

In detail, $\bm{h}(\cdot)$ is considered as \cite{gomez2018electric}:
\begin{subequations}\label{eq:power_balance}
\begin{equation}\label{eq:power_balance_pi}
    P_{i}=v_{i} \sum_{j=1}^{n} v_{j}\left(g_{i j} \cos \theta_{i j}+b_{i j} \sin \theta_{i j}\right)
\end{equation}
\begin{equation}\label{eq:power_balance_qi}
    Q_{i}=v_{i} \sum_{j=1}^{n} v_{j}\left(g_{i j} \sin \theta_{i j}-b_{i j} \cos \theta_{i j}\right)
\end{equation}
\begin{equation}\label{eq:power_balance_pf}
    P_{k: i \to j}=v_{i} v_{j}\left(g_{i j} \cos \theta_{i j}+b_{i j} \sin \theta_{i j}\right)-g_{i j} v_{i}^{2}
\end{equation}
\begin{equation}\label{eq:power_balance_qf}
    Q_{k: i \to j}=v_{i} v_{j}\left(g_{i j} \sin \theta_{i j}-b_{i j} \cos \theta_{i j}\right)+ b_{i j}v_{i}^{2}
\end{equation}
\end{subequations}
where $P_i$ and $Q_i$ are the active and reactive power injections at bus $i$; $P_{k:i\to j}$ and $Q_{k: i\to j}$ are the $k$-th active and reactive power flows from bus $i$ to $j$; $\theta_{ij} = \theta_i - \theta_j$ is the phase angle difference between bus $i$ and $j$.

As shown in Fig. \ref{fig:ems}, 
the control centre is equipped with state estimation (SE) which serves as a bridge between remote terminal units (RTU) and the energy management system (EMS) \cite{gomez2018electric}. Given the measurements, the AC-SE is solved by the following weighted least-square problem using iterative algorithm, such as Gauss-Newton method \cite{abur2004power}:
\begin{equation}\label{eq:ac_se}
    \min _{\hat{\bm{\nu}}} J(\hat{\bm{\nu}})=(\bm{z}-\bm{h}(\hat{\bm{\nu}}))^{T} \cdot \bm{R}^{-1} \cdot(\bm{z}-\bm{h}(\hat{\bm{\nu}}))
\end{equation}
where $\hat{\bm{\nu}}$ is the estimated state. Furthermore, the bad data detection (BDD) at the control centre detects any measurement error that violates a Gaussian prior. Given $\bm{\hat{\nu}}$, the residual vector is calculated as $\bm{r} = \bm{z}-\bm{h}(\bm{\hat{\nu}})$ and the residual is represented as $\gamma(\bm{z}) = \|\bm{R}^{-\frac{1}{2}}\bm{r}\|_2^2$. Let $\bm{e}$ be the random variable; then $\gamma$ approximately follows $\chi^2$ distribution with degree of freedom (DoF) $p-(2n+1)$ \cite{abur2004power}. The threshold $\tau_\chi(\alpha)$ of the $\chi^2$ detector can be defined probabilistically based on the desired False Positive Rate (FPR) $\alpha\in(0,1)$ by the system operator \cite{abur2004power}:
\begin{equation}\label{eq:chi_square_distribution}
    \int_{\tau_\chi(\alpha)}^\infty g(u)du = \alpha
\end{equation}
where $g(u)$ is the p.d.f of the $\chi^2$ distribution and $\alpha$ is usually set as 1\%-5\%. Consequently, the BDD detector can be designed as:
\begin{equation*}\label{eq:detector_bdd}
\mathcal{D}_{BDD}(\bm{z}) =  \begin{cases} 1 & \gamma(\bm{z}) \geq \tau_\chi(\alpha) \\ 0 & \gamma(\bm{z}) < \tau_\chi(\alpha)\end{cases}
\end{equation*}

\subsection{Attack Assumptions}
\label{sec:fdia}

With the emerging implementation of information and communication techniques, standard protocols, such as Modbus, can be vulnerable to FDI attacks. It has been shown that an FDI attack $\bm{z}_a=\bm{z}+\bm{a}$ can bypass the BDD if $\bm{a} = \bm{h}({\bm{\nu}}+\bm{c}) - \bm{h}({\bm{\nu}})$ where $\bm{c}$ is the attack vector on the state vector.
In this case, the contaminated measurement becomes $\bm{z}_a=\bm{h}(\bm{\nu}+\bm{c})+\bm{e}$ whose residual follows the same $\chi^2$ distribution as the legit measurement $\bm{z}$. 

To successfully launch FDI attacks, we assume the attacker's abilities as follows.

\textbf{\textit{Assumption 1}}: The attackers can access all measurements and are aware of the admittance and topology of the grid to build $\bm{h}(\cdot)$. The exfiltration can be achieved by data-driven algorithms \cite{zhang2020topology, kim2014subspace, yu2015blind, lakshminarayana2021data}. However, the duration of data collection is much longer than a single state estimation time, implying that the attacker cannot immediately know the exact value of reactance changes \cite{lakshminarayana2021cost}. Meanwhile, attackers are also aware of the exact state or estimation of the state from previous measurements \cite{hug2012vulnerability, rahman2013false}.

\textbf{\textit{Assumption 2}}: The attackers can modify or replace all the eavesdropped measurements to achieve their purposes. However, since large instant measurement changes may violate the temporal trends of the grid measurements and be detected \cite{zhao2017short, xu2020deep}, the attack strength $\|\bm{a}\|_2$ is assumed to be small. 

Assumptions 1-2 require the attacker's efforts to gain sufficient knowledge on the grid topology and operational conditions, which may not be easy in practise. However, we assume a strong attack ability and study the defence algorithm against general and unpredictable FDI attacks.

\subsection{Moving Target Defence}

By using the D-FACTS devices, the system operator can proactively change the reactances to keep invalidating the attacker's knowledge on $\bm{h}(\cdot)$:
\begin{equation*}
    \bm{h}_{\bm{x}}(\cdot)\xrightarrow{\text{D-FACTS}} \bm{h}_{\bm{x}'}(\cdot)
\end{equation*}
where $\bm{x}' = \bm{x} + \Delta \bm{x}$ is the reaction after activating the D-FACTS devices. As illustrated in Fig. \ref{fig:ems}, the channels of D-FACTS devices are encrypted and MTD is implemented with a period shorter than the reconnaissance time of the attacker (see Assumption 1). In addition, the reactances changed by the D-FACTS devices are physically limited:
\begin{subequations}
\begin{equation}\label{eq:reactance_limit_in}
    -\tau \bm{x}_i \leq  \Delta \bm{x}_i \leq \tau \bm{x}_i,\quad i\in\mathcal{E}_D
\end{equation}
\begin{equation}\label{eq:reactance_limit_out}
    \Delta \bm{x}_i = 0, \quad i\in\mathcal{E}\setminus\mathcal{E}_D
\end{equation}
\end{subequations}
where $\bm{x}_i$ is the reactance of the $i$th branch; $\tau$ represents the maximum perturbation ratio of D-FACTS devices. Typical values of $\tau$ are reported as $20\%-50\%$ in the literature \cite{liu2018reactance,liu2020optimal,zhang2019analysis,lakshminarayana2021cost}; $\mathcal{E}_D$ represents the set of branches equipped with the D-FACTS devices.
After implementing MTD, the residual vector becomes $\bm{r}_a' = \bm{h}'(\bm{x}) + \bm{h}(\bm{x}+\bm{c}) - \bm{h}(x) + \bm{e}$ which may no longer follow the $\chi^2$ distribution of the legit measurement and hence trigger the BDD. 

\subsection{Model Simplification for MTD Design}

To design the MTD against FDI attacks, most of the literature relies on DC or simplified AC power system models \cite{liu2018reactance, zhang2019analysis, liu2020optimal, lakshminarayana2021cost, zhang2021strategic, li2019feasibility, liu2019joint} and then verifies the performance on the full AC model. Here, we adopt the simplified AC model based on the linearised measurement equation. Compared with the DC model, the simplified AC model can reflect different state values with branch resistance also considered. 

In detail, the first-order Taylor expansion can be established around a stationary state $\bm{\nu}_0$:
\begin{equation}\label{eq:linear_equation}
    \bm{z} = \bm{h}(\bm{\nu}_0) + \bm{J}_{\bm{\nu}_0}(\bm{\nu}-\bm{\nu}_0) + \bm{e}
\end{equation}
where the Jacobian matrix of $\bm{h}(\cdot)$ is found with respect to $\bm{\nu}_0$ as $\bm{J}_{\bm{\nu}_0} = \left[\left.\frac{\partial \bm{h}_k}{\partial \bm{\nu}_i} \right\vert_{\bm{\nu}=\bm{\nu}_0} \right]_{i,k}$. The state $\bm{\nu}_0$ can be simulated from security constrained AC-OPF \cite{gomez2018electric} around the estimated active and reactive loads before the real-time operation. Alternatively, the states estimated from the previous measurements or a flat state \cite{liu2019joint, liu2021optimalcoding} can also be used. Following the recent literature on MTD \cite{liu2018reactance,liu2020optimal,liu2019joint}, 
we consider the FDI attacks on the voltage phase angle and derive the defence strategies according to the power flow measurements at each branch. Therefore, the Jacobian matrix is considered as follows.
\begin{equation}\label{eq:jacobian}
    \bm{J}_{\bm{\theta_0}} = \left[\left.\frac{\partial \bm{P}_{k:i\to j}}{\partial \bm{\theta}_i} \right\vert_{\bm{\theta}=\bm{\theta}_0} \right]_{k} = -\bm{V}\cdot\bm{G}\cdot\bm{A}_r^{\sin}+\bm{V}\cdot\bm{B}\cdot\bm{A}_r^{\cos}
\end{equation}
where $\bm{V} = \text{diag}\left((\bm{C}_f\bm{v})\circ(\bm{C}_t\bm{v})\right)$; $\bm{G} = \text{diag}(\bm{g})$; $\bm{B} = \text{diag}(\bm{b})$; $\bm{A}_{r}^{\sin} = \text{diag}(\sin{\bm{A\theta_0}})\bm{A}_r$; and $\bm{A}_{r}^{\cos} = \text{diag}(\cos{\bm{A\theta_0}})\bm{A}_r$. Moreover, $\bm{C}_f$ and $\bm{C}_t$ are the `from' and `to' -side incidence matrices; $\bm{A}_r$ is the reduced incidence matrix by removing the column representing the reference bus from the incidence matrix $\bm{A}$. To simplify the notation, we omit the subscript $\bm{\theta}_0$ in $\bm{J}_{\bm{\theta_0}}$ in the following discussion.

According to Assumption 2, as the attack strength is limited, the attack vector can also be linearised around $\bm{\theta}_0$ as \cite{liu2019joint}:
\begin{equation}\label{eq:linear_attack}
    \bm{a} = \bm{h}(\bm{\theta}_0+\bm{c}) - \bm{h}(\bm{\theta}_0) = \bm{J}\bm{c}
\end{equation}

We design the MTD algorithm based on the simplified AC model \eqref{eq:linear_equation}-\eqref{eq:linear_attack} using active power flow measurements. The proposed MTD will be applied to the original AC model \eqref{eq:power_balance}-\eqref{eq:ac_se} in the simulation.

\section{Analysis on MTD Effectiveness}

\label{sec:analysis_mtd_eff}

In this section, we first extend the concept of complete MTD in the literature from DC model to simplified AC model. We then define the MTD effectiveness in a probabilistic way and illustrate the need for a new metric on effective MTD design in a noisy environment.

\subsection{Complete MTD}

Let $\bm{H}$ and $\bm{H}'$ be the DC measurement matrices. Under the noiseless condition, the \textit{complete MTD} can be designed to detect any FDI attack by keeping the composite matrix $[\bm{H},\bm{H}']$ full column rank \cite{liu2018reactance, zhang2019analysis, liu2020optimal, zhang2021strategic}. If the full rank condition cannot be achieved due to the sparse grid topology (e.g. $m<2n$) or limited number of D-FACTs devices, a \textit{max-rank incomplete MTD} can be designed to minimise the attack space. As the rank of the composite matrix is maximised under both complete and incomplete conditions, we refer to the MTD strategies in \cite{liu2018reactance, zhang2019analysis, liu2020optimal, zhang2021strategic} as \textit{max-rank MTD}.  

To better define the problem, we extend the concept of complete and incomplete MTDs from the DC model to the simplified AC models in the following proposition:

\begin{proposition}\label{prop:com_incom}
    The power system modelled by \eqref{eq:linear_equation} is with complete configuration against the FDI attack modelled by \eqref{eq:linear_attack} only if $m\geq 2n$ where $m$ and $n$ are the number of branches and the number of non-reference buses, respectively.
\end{proposition}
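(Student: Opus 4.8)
The plan is to read ``complete configuration'' exactly as the DC notion of complete MTD transplanted into the linear model \eqref{eq:linear_equation}: the grid admits a complete configuration if there is a D-FACTS perturbation $\Delta\bm{x}$ obeying \eqref{eq:reactance_limit_in}--\eqref{eq:reactance_limit_out} under which no nonzero FDI attack of the form \eqref{eq:linear_attack} escapes the BDD after MTD. In the linear model an attack $\bm{a}=\bm{J}\bm{c}$ survives the BDD after MTD precisely when $\bm{a}\in\text{Col}(\bm{J}')$ (then it merely shifts the post-MTD weighted least-squares estimate and leaves the residual unchanged), so completeness is equivalent to $\text{Col}(\bm{J})\cap\text{Col}(\bm{J}')=\{\bm{0}\}$, i.e. to the composite Jacobian $[\bm{J},\bm{J}']\in\mathbb{R}^{m\times 2n}$ being full column rank. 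Hence it suffices to show that this rank condition forces $m\ge 2n$, which I would obtain from a dimension count on the two column spaces.

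First I would fix $\text{rank}(\bm{J})=\text{rank}(\bm{J}')=n$. Factor \eqref{eq:jacobian} as $\bm{J}=\bm{D}\bm{A}_r$ with the $m\times m$ diagonal matrix $\bm{D}=\bm{V}\big(-\bm{G}\,\text{diag}(\sin\bm{A}\bm{\theta}_0)+\bm{B}\,\text{diag}(\cos\bm{A}\bm{\theta}_0)\big)$, whose entry on branch $k\!:\!i\to j$ is $v_iv_j(b_{ij}\cos\theta_{ij}-g_{ij}\sin\theta_{ij})$; this is nonzero at any physically meaningful operating point (positive bus voltages and moderate angle differences, for which the susceptance term dominates), so $\bm{D}$ is nonsingular and $\text{rank}(\bm{J})=\text{rank}(\bm{A}_r)$. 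Since $\mathcal{G}(\mathcal{N},\mathcal{E})$ is connected, the columns of the $(n{+}1)$-column incidence matrix obey a single linear dependency---they sum to zero---so deleting the reference column leaves $\bm{A}_r$ with full column rank $n$; hence $\text{rank}(\bm{J})=n$. The identical argument applies to $\bm{J}'$: a reactance change only rescales $\bm{b}$ and thus leaves every diagonal entry of the corresponding $\bm{D}'$ nonzero, so $\text{rank}(\bm{J}')=n$.

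Now $\text{Col}(\bm{J})$ and $\text{Col}(\bm{J}')$ are $n$-dimensional subspaces of $\mathbb{R}^m$, and the dimension formula gives $\dim\!\big(\text{Col}(\bm{J})\cap\text{Col}(\bm{J}')\big)=2n-\dim\!\big(\text{Col}(\bm{J})+\text{Col}(\bm{J}')\big)\ge 2n-m$. Therefore, if $m<2n$ the intersection is nontrivial: pick $\bm{0}\ne\bm{w}$ in it, write $\bm{w}=\bm{J}\bm{c}$ with $\bm{c}\ne\bm{0}$ (injectivity of $\bm{J}$), and observe $\bm{w}\in\text{Col}(\bm{J}')$, so $\bm{w}$ is an FDI attack \eqref{eq:linear_attack} that remains undetectable after MTD---and this holds for \emph{every} feasible $\Delta\bm{x}$. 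Equivalently, $\text{rank}([\bm{J},\bm{J}'])\le m<2n$, so $[\bm{J},\bm{J}']$ can never be full column rank. A complete configuration is thus impossible whenever $m<2n$, and the contrapositive is precisely the proposition. (Note that only necessity is claimed: $m\ge 2n$ does not by itself guarantee completeness, which also needs adequate D-FACTS placement.)

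The one delicate point is the nonsingularity of the diagonal factors $\bm{D}$ and $\bm{D}'$: this is what transfers the purely topological rank of $\bm{A}_r$ to $\bm{J}$ and $\bm{J}'$, and it rules out only the degenerate operating points at which $\tan\theta_{ij}=b_{ij}/g_{ij}$ on some branch or some bus voltage vanishes. Everything else---the connectivity-based rank of $\bm{A}_r$ and the subspace dimension count---is routine linear algebra, so I do not anticipate further obstacles.
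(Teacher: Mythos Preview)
Your proof is correct and lands on the same rank bound as the paper: both arguments ultimately use that $[\bm{J},\bm{J}']\in\mathbb{R}^{m\times 2n}$ has rank at most $m$, so $m<2n$ precludes full column rank and hence completeness. The paper gets there by writing $[\bm{J},\bm{J}']$ as a three-factor product and invoking $\text{rank}([\bm{J},\bm{J}'])\le\min\{m,m,2n\}$; you instead establish $\text{rank}(\bm{J})=\text{rank}(\bm{J}')=n$ via the factorization $\bm{J}=\bm{D}\bm{A}_r$ and then use the subspace dimension formula to exhibit a nonzero element of $\text{Col}(\bm{J})\cap\text{Col}(\bm{J}')$. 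Your route is a little longer---the individual rank-$n$ claims and the nonsingularity of $\bm{D},\bm{D}'$ are not actually needed for the \emph{only if} direction, since the row-count bound $\text{rank}([\bm{J},\bm{J}'])\le m$ already suffices (as you yourself note with ``Equivalently'')---but it buys you an explicit undetectable attack vector and the observation that $\text{rank}(\bm{J})=n$, which the paper uses elsewhere without proof.
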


\begin{proof}
Please refer to Appendix \ref{sec:proof_com_incom}.
\end{proof}

As stated by Proposition \ref{prop:com_incom}, to have a complete configuration $\text{rank}([\bm{J}_N,\bm{J}_N']) = 2n$, the number of branches should be at least one time larger than the number of non-reference buses. In addition, the max-rank incomplete MTD with $\text{rank}([\bm{J}_N,\bm{J}_N']) = m$ can be designed for the grid with incomplete configuration. In the following discussions, we refer the grid that can achieve complete MTD under certain topology and D-FACTS device deployment as \textit{complete configuration}, otherwise as \textit{incomplete configuration}. 

\subsection{$\beta$-Effective MTD}

Following \eqref{eq:linear_equation}, denote $\bm{z} \triangleq \bm{z} - \bm{h}(\bm{\theta}_0)$ and $\bm{\theta} \triangleq \bm{\theta} - \bm{\theta}_0$. For the new system equation $\bm{z} = \bm{J}\bm{\theta} + \bm{e}$, the residual vector of the $\chi^2$ detector can be written as $\bm{r} = \bm{S}(\bm{J}\bm{\theta}+\bm{e}) = \bm{Se}$ where $\bm{S} = \bm{I} - \bm{J}(\bm{J}^T\bm{R}^{-1}\bm{J})^{-1}\bm{J}^T\bm{R}^{-1}$ is the weighted orthogonal projector on $\text{Ker}(\bm{J}^T)$. The residual $\gamma = \|\bm{R}^{-\frac{1}{2}}\bm{S}\bm{e}\|_2^2$ follows the $\chi^2$ distribution with DoF $m-n$. Referring to the simplified attack model \eqref{eq:linear_attack}, the residual vector after MTD under attack can be written as $\bm{r}_a' = \bm{S}'(\bm{J}\bm{c} + \bm{e})$ where $\bm{S}' = \bm{I} - \bm{J}'(\bm{J}'^T\bm{R}^{-1}\bm{J}')^{-1}\bm{J}'^T\bm{R}^{-1}$. As $\bm{a}$ is usually not in $\mathcal{J}'$ and $\bm{r}'_a$ is biased from zero, 
the residual $\gamma_a' = \|\bm{R}^{-\frac{1}{2}}\bm{S}'(\bm{Jc} + \bm{e})\|_2^2$ follows the non-central $\chi^2$ distribution, i.e. $\gamma_a' \sim \chi_{m-n}^2(\lambda)$ with non-centrality parameter $\lambda = \|\bm{R}^{-\frac{1}{2}}{\bm{S}}'\bm{Jc}\|_2^2$ \cite{krishnamoorthy2006handbook}. Meanwhile, the mean and variance of the distribution are given as $\mathbf{E}(\gamma_a') = m-n +\lambda$ and $\mathbf{Var}(\gamma_a') = 2(m-n+2\lambda)$, respectively. For clear presentation, the matrices are normalised with respect to the measurement noises, e.g., $\bm{J}_N = \bm{R}^{-\frac{1}{2}}\bm{J}$ and $\bm{a}_N = \bm{J}_N\bm{c}$. More details can be found in Appendix \ref{sec:app_normalize}.  

It is clear that when a noisy environment is considered, deterministic criteria can no longer be used to describe the effectiveness of MTD. A probabilistic criteria is hence defined. Following \eqref{eq:chi_square_distribution}, for any given attack vector $\bm{a}$, we define an MTD as $\beta$-\textit{effective} ($\beta$-MTD in short) if the following inequality is satisfied:
\begin{equation}\label{eq:non_central_chi_square_distribution}
    f(\lambda) = \int_{\tau_\chi(\alpha)}^\infty g_\lambda(u)du \geq \beta
\end{equation}
where $g_\lambda(u)$ is the p.d.f. of non-central $\chi^2$ distribution and $\beta\in(0,1)$ is a desired detection rate. When $\lambda$ increases from 0, the detection probability on $\bm{a}$ also increases as the mean and variance increase \cite{teixeira2010cyber}. Therefore, for a given $\beta$, there exists a minimum $\lambda$ such that \eqref{eq:non_central_chi_square_distribution} is satisfied. This minimum $\lambda$ is defined as critical and denoted as $\lambda_c(\beta)$.

Consequently, the rank conditions in \cite{liu2018reactance, zhang2019analysis, liu2020optimal, zhang2021strategic, liu2019joint} cannot guarantee detection performance, as they are not directly linked with the increase of $\lambda$ to have $\beta$-MTD. Fig. \ref{fig:residual_illustration} illustrates the c.d.f. of $\gamma'$ on a random FDI attack using max-rank MTDs in a case-14 system. Without using MTD, the detection rate is 5\% which is consistent with the FPR. To have a high detection rate, e.g., $\beta=95\%$, it is desirable to sufficiently shift the distribution as shown by the blue curve. The max-rank MTDs can shift the c.d.f. positively, but there is no guarantee on how much of such shift can be achieved and whether it leads to the desired detection rates. This finding clearly calls for a new design of MTD algorithm in a noisy environment. 

Moreover, as numerically shown by \cite{li2019feasibility}, not all attacks can be detected by the MTD with high detection rate. Therefore, we theoretically introduce the following necessary condition to have $\beta$-MTD which can be seen as the limitation of MTD against FDI attacks with small attack strength.

\begin{proposition}\label{prop:min_att}
An MTD is $\beta$-effective only if $\|\bm{a}_N\|_2\geq \sqrt{\lambda_c(\beta)}$. 
\end{proposition}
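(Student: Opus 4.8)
The plan is to exploit the clean form the non-centrality parameter takes after the noise normalization of Appendix~\ref{sec:app_normalize}. In the normalized coordinates the post-MTD residual projector $\bm{S}'_N$ is a genuine \emph{Euclidean} orthogonal projector onto $\text{Ker}((\bm{J}_N')^T)$ (not the $\bm{R}^{-1}$-weighted oblique projector $\bm{S}'$ that appears in the unnormalized residual $\gamma_a'=\|\bm{R}^{-1/2}\bm{S}'(\bm{Jc}+\bm{e})\|^2$), and the non-centrality parameter becomes $\lambda=\|\bm{S}'_N\bm{a}_N\|_2^2$ with $\bm{a}_N$ the normalized attack vector. So the first step is simply to record this identity and then use that an orthogonal projector is a contraction, $\|\bm{S}'_N\|=1$, to conclude $\lambda=\|\bm{S}'_N\bm{a}_N\|_2^2\leq\|\bm{a}_N\|_2^2$; equivalently $\lambda=\|\bm{a}_N\|_2^2-\|\bm{P}_{\bm{J}_N'}\bm{a}_N\|_2^2\leq\|\bm{a}_N\|_2^2$. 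Hence no attack of normalized strength $\|\bm{a}_N\|_2$ can induce a non-centrality larger than $\|\bm{a}_N\|_2^2$.

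The second step ties this to $\beta$-effectiveness. I would invoke the monotonicity of $f(\lambda)=\int_{\tau_\chi(\alpha)}^\infty g_\lambda(u)\,du$ in $\lambda$, already noted just below~\eqref{eq:non_central_chi_square_distribution}, and the definition of $\lambda_c(\beta)$ as the smallest $\lambda$ with $f(\lambda)\geq\beta$: being $\beta$-effective is equivalent to $f(\lambda)\geq\beta$, which by monotonicity is equivalent to $\lambda\geq\lambda_c(\beta)$. Chaining this with the bound of the first step gives $\lambda_c(\beta)\leq\lambda\leq\|\bm{a}_N\|_2^2$, hence $\|\bm{a}_N\|_2\geq\sqrt{\lambda_c(\beta)}$, which is the claim. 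Equivalently, in contrapositive form: if $\|\bm{a}_N\|_2<\sqrt{\lambda_c(\beta)}$ then $\lambda<\lambda_c(\beta)$ and the MTD fails to be $\beta$-effective against that attack.

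There is no deep obstacle; the only real care is bookkeeping around the normalization, i.e. verifying that in the normalized model $\bm{S}'_N$ is indeed the symmetric idempotent projector with unit spectral norm, which is exactly what Appendix~\ref{sec:app_normalize} arranges — once that is in place the proposition is a one-line consequence of $\|\bm{S}'_N\|=1$ and the monotonicity of $f$. It is worth adding a short remark that the bound is tight: $\lambda=\|\bm{a}_N\|_2^2$ holds precisely when $\bm{a}_N\in\text{Ker}((\bm{J}_N')^T)$, i.e. when the (normalized) attack is orthogonal to the post-MTD Jacobian subspace $\mathcal{J}_N'$, which anticipates the principal-angle characterization of the worst- and best-case behaviour developed in the remainder of Section~\ref{sec:analysis_mtd_eff}.
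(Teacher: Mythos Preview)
Your proof is correct and follows essentially the same approach as the paper: both argue that $\lambda=\|\bm{S}_N'\bm{a}_N\|_2^2\leq\|\bm{S}_N'\|^2\|\bm{a}_N\|_2^2=\|\bm{a}_N\|_2^2$ because the normalized projector has unit spectral norm, and combine this with the fact that $\beta$-effectiveness requires $\lambda\geq\lambda_c(\beta)$. Your write-up is more explicit about the monotonicity of $f$ and adds a useful tightness remark, but the core argument is identical.
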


\begin{proof}
Please refer to Appendix \ref{sec:app_min_att}.
\end{proof}

Proposition \ref{prop:min_att} can be further analysed on $\bm{a}$ to have $\|\bm{a}\|_2\geq \sigma_{min}\sqrt{\lambda_c(\beta)}$ with $\sigma_{min} = \min_i\{\sigma_1,\sigma_2,\dots,\sigma_m\}$. This implies that $\beta$-MTD can be achieved only if the ratio between attack strength and measurement noise is higher than a certain value, which verifies the numerical results in \cite{li2019feasibility}. 

\begin{figure}[]
    \centering
    \includegraphics[width=0.24\textwidth]{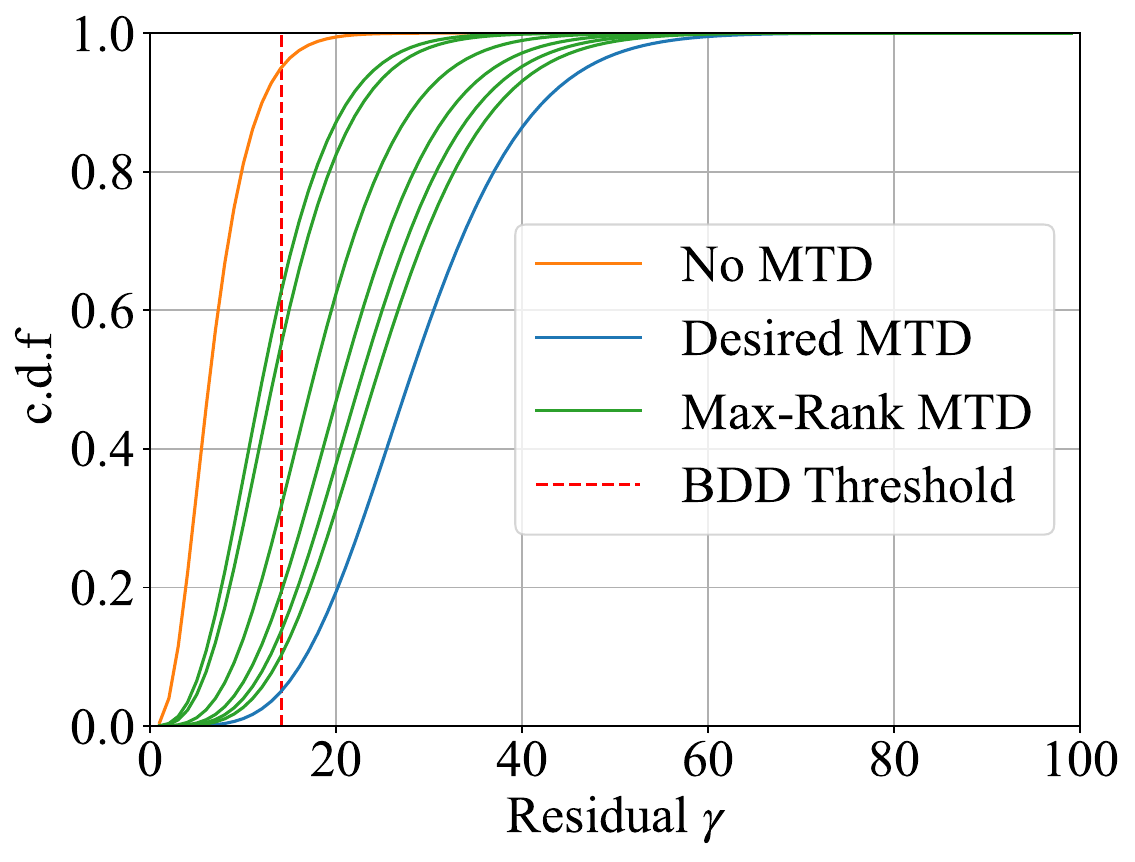}
    \caption{Illustration of attack detection probability on IEEE case-14 system based on simplified AC model \eqref{eq:linear_equation}-\eqref{eq:linear_attack}. The more positively the c.d.f. is shifted, the higher averaged detection rate can be achieved.}
    \label{fig:residual_illustration}
\end{figure}

\subsection{Max MTD}
While Proposition \ref{prop:min_att} establishes the theoretical limit on the detection probability for any given attack strength, in practise, the constraints on D-FACTS devices \eqref{eq:reactance_limit_in}-\eqref{eq:reactance_limit_out} further restricts such limit. In this context, the maximum detection rate on a known attack vector $\bm{a}_N$, with the limits of the D-FACTS devices considered, can be found by the \textit{max-MTD} algorithm:

\begin{equation}\label{eq:max_mtd}
\begin{array}{cc}
      \max_{\Delta \bm{x}} & {\|\bm{S}_N'\bm{a}_N\|_2^2}  \\
    \text{s.t.} & \eqref{eq:reactance_limit_in}-\eqref{eq:reactance_limit_out}
\end{array}
\end{equation}

In practice, it is impossible to design $\Delta \bm{x}$ to achieve a certain $\lambda_c(\beta)$ in advance as $\bm{a}_N$ cannot be known. Nonetheless, max-MTD can be regarded as the performance upper-bound for any MTD strategy with the same placement and perturbation limit.

\section{Robust MTD Algorithms}

In this section, we start by establishing the concept of robust MTD and its mathematical formulation. Then the robust MTD algorithms are formulated for the grid with complete and incomplete configurations, respectively.

\label{sec:robust mtd algorithm}

\subsection{Definition and Problem Formulation}

Instead of considering the average detection rate, this paper defines the robust MTD that can maximise the worst-case detection rate against all possible attacks. First, we define the weakest point for a given MTD design as follows.

\begin{definition}\label{def:vul}
Given $\Delta \bm{x}$ and the corresponding pair of subspaces $(\mathcal{J}_N,\mathcal{J}_N')$, the weakest point of $(\mathcal{J}_N,\mathcal{J}_N')$ is defined as a unitary element $\bm{j}_N^*\in\mathcal{J}_N$ such that $\lambda(\Delta \bm{x}, {\bm{j}_N^*}) \leq \lambda(\Delta \bm{x},\bm{j}_N)$ for $\forall \bm{j}_N\in\mathcal{J}_N$, $\|\bm{j}_N\|_2=1$. The worst-case detection rate for attack strength $\|\bm{a}_N\|_2 = |a|\neq0$ is defined as $f(\lambda_{\text{min}})$ with $\lambda_{\text{min}} = \lambda(\Delta \bm{x}, a\bm{j}_N^*)$.
\end{definition}

According to the Definition \ref{def:vul}, the weakest point in $(\mathcal{J}_N,\mathcal{J}_N')$ satisfies $|a|\|\bm{S}_N'\bm{j}_N^*\|_2\leq |a|\|\bm{S}_N'\bm{j}_N\|_2$, $\forall \bm{j}_N\in\mathcal{J}_N, \|\bm{j}_N\|_2 = 1, a \neq 0$. Let $\bm{a}_N^* = a\bm{j}_N^*$ and $\bm{a}_N = a\bm{j}_N$, the detection rate on $\bm{a}_N^*$ is the lowest among all attacks with the same strength as $\|\bm{S}_N'\bm{a}_N^*\|_2\leq \|\bm{S}_N'\bm{a}_N\|_2, \forall \bm{a}_N\in\mathcal{J}_N, \|\bm{a}_N\|_2=|a| \neq 0$.
Note that the weakest point may not be unique, but all of them have the same worst-case detection rate.

Based on the definition of MTD weakest point, the following robust max-min optimization problem can be formulated:
\begin{equation}\label{eq:robust_opt}
\begin{array}{cc}
      \max_{\Delta \bm{x}} \min_{\|\bm{a}_N\|_2 = 1,\bm{a}_N \in \mathcal{J}_N} & {\|\bm{S}_N'\bm{a}_N\|_2^2}  \\
    \text{s.t.} & \eqref{eq:reactance_limit_in}-\eqref{eq:reactance_limit_out}
\end{array}
\end{equation}

The inner problem $\min_{\|\bm{a}_N\|_2 = 1,\bm{a}_N \in \mathcal{J}_N} {\|\bm{S}_N'\bm{a}_N\|_2^2}$ is the mathematical formulation of the weakest point in Definition \ref{def:vul} which is maximised over the outer programming. From a game-theoretic point of view, we can present this setting as an intelligent attacker aims to develop an FDI attack with the highest probability to bypass BDD and the system operator tries to improve his/her defence strategy against this intelligent attacker. 

In the following sections, we will show that the two-stage problem \eqref{eq:robust_opt} can be reduced into a single-stage minimisation problem by analytically representing the weakest point using the principal angles between $\mathcal{J}_N$ and $\mathcal{J}_N'$.

\subsection{Robust MTD for the Grid with Complete Configuration }

Similar to the one-dimensional case where the angle between two unitary vectors $\bm{u}$ and $\bm{v}$ is defined as $\cos{\theta} = \bm{v}^T\bm{u}$, the minimal angle between subspaces $\mathcal{J}_N, \mathcal{J}_N'\subseteq \mathbb{R}^p$ is defined as $0 \leq \theta_{1} \leq \pi/2$ \cite{meyer2000matrix}:
\begin{equation}\label{eq:minimal_angle}
    \cos \theta_{1}=\max_{\bm{u}\in \mathcal{J}_N, \bm{v} \in \mathcal{J}_N' \atop\| \bm{u}\left\|_{2}=\right\| \bm{v} \|_{2}=1} \bm{u}^{T} \bm{v} = \bm{u}_1^{T} \bm{v}_1
\end{equation}
where $\theta_1$ is the minimal principal angle; $\bm{u}_1$ and $\bm{v}_1$ are the first principal vectors. Referring to \eqref{eq:minimal_angle}, the following proposition specifies that the weakest point with the lowest detection rate of $(\mathcal{J}_N,\mathcal{J}_N')$ 
is the first principal vector $\bm{u}_1$ associated with the minimal principal angle $\theta_1$.

\begin{proposition}\label{prop:min_eff}
Given a pair of $(\mathcal{J}_N,\mathcal{J}_N')$, the minimum non-centrality parameter under attack strength $\|\bm{a}_N\|_2 = |a| \neq 0$ is $\lambda_{\text{min}} = a^2\sin^2{\theta_1}$. Meanwhile, $\lambda_{\text{min}}$ is achieved by attacking the first principal vector $\bm{u}_1$ of $\bm{J}_N$.
\end{proposition}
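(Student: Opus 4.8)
The plan is to reduce the inner minimization that defines the weakest point in Definition~\ref{def:vul} to the variational characterization of the minimal principal angle in~\eqref{eq:minimal_angle}. Since the matrices have been normalized so that the noise covariance is the identity (Appendix~\ref{sec:app_normalize}), $\bm{S}_N'$ is the genuine orthogonal projector onto $(\mathcal{J}_N')^{\perp}$ and $\bm{P}_N' \triangleq \bm{I}-\bm{S}_N' = \bm{P}_{\bm{J}_N'}$ is the orthogonal projector onto $\mathcal{J}_N'$. Writing an arbitrary attack of strength $|a|$ as $\bm{a}_N = a\,\bm{j}_N$ with $\bm{j}_N\in\mathcal{J}_N$, $\|\bm{j}_N\|_2 = 1$, the non-centrality parameter is $\lambda = \|\bm{S}_N'\bm{a}_N\|_2^2 = a^2\,\bm{j}_N^T\bm{S}_N'\bm{j}_N = a^2\bigl(1-\|\bm{P}_N'\bm{j}_N\|_2^2\bigr)$, using that $\bm{S}_N'$ is symmetric idempotent. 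Hence minimizing $\lambda$ over unit directions $\bm{j}_N\in\mathcal{J}_N$ is equivalent to \emph{maximizing} $\|\bm{P}_N'\bm{j}_N\|_2$ over the same set.

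Next I would show $\max_{\bm{j}_N\in\mathcal{J}_N,\,\|\bm{j}_N\|_2=1}\|\bm{P}_N'\bm{j}_N\|_2 = \cos\theta_1$. For the upper bound, for every unit $\bm{v}\in\mathcal{J}_N'$ we have $\bm{v}^T\bm{j}_N = \bm{v}^T\bm{P}_N'\bm{j}_N \le \|\bm{P}_N'\bm{j}_N\|_2$ by Cauchy--Schwarz, and this is tight at $\bm{v} = \bm{P}_N'\bm{j}_N/\|\bm{P}_N'\bm{j}_N\|_2$, so $\|\bm{P}_N'\bm{j}_N\|_2 = \max_{\bm{v}\in\mathcal{J}_N',\,\|\bm{v}\|_2=1}\bm{j}_N^T\bm{v}$. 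Maximizing once more over $\bm{j}_N$ reproduces exactly the right-hand side of~\eqref{eq:minimal_angle}, namely $\cos\theta_1$, attained at the first principal pair $(\bm{u}_1,\bm{v}_1)$. Substituting back gives $\min\lambda = a^2(1-\cos^2\theta_1) = a^2\sin^2\theta_1$, attained at $\bm{j}_N = \bm{u}_1$; this is the claimed $\lambda_{\text{min}}$ and identifies $\bm{u}_1$ as a weakest point of $(\mathcal{J}_N,\mathcal{J}_N')$.

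As an alternative I may prefer for concreteness, the bi-orthogonality relation~\eqref{eq:bithogonality} yields an immediate closed form. Expanding $\bm{j}_N = \bm{U}\bm{\xi}$ with $\|\bm{\xi}\|_2 = 1$ (legitimate since $\bm{U}$ is a semi-orthogonal basis of $\mathcal{J}_N$) and using $\bm{P}_N' = \bm{V}\bm{V}^T$ together with $\bm{V}^T\bm{U} = \Gamma$, one gets $\|\bm{P}_N'\bm{j}_N\|_2^2 = \bm{\xi}^T\Gamma^2\bm{\xi} = \sum_{i=1}^{n}\cos^2\theta_i\,\xi_i^2$, hence $\lambda = a^2\sum_{i=1}^{n}\sin^2\theta_i\,\xi_i^2$, i.e. a convex combination of the numbers $a^2\sin^2\theta_i$. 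Because $0\le\theta_1\le\cdots\le\theta_n$, the value $\sin^2\theta_1$ is the smallest of these, so the minimum over $\sum_i\xi_i^2 = 1$ equals $a^2\sin^2\theta_1$, attained at $\bm{\xi}=\bm{e}_1$, that is $\bm{j}_N=\bm{u}_1$.

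I do not anticipate a genuine obstacle: the statement is essentially the principal-angle (Courant--Fischer-type) identity. The only points requiring care are bookkeeping ones: confirming that after normalization $\bm{S}_N'$ is an \emph{orthogonal} (not oblique) projector, so the identity $\|\bm{S}_N'\bm{j}_N\|_2^2 = 1-\|\bm{P}_N'\bm{j}_N\|_2^2$ holds; noting that the admissible attack directions form precisely the unit sphere of $\mathcal{J}_N = \text{Col}(\bm{J}_N)$ because of the linearized stealth condition $\bm{a}_N = \bm{J}_N\bm{c}$; and the elementary fact that a convex combination of nonnegative numbers is minimized at the smallest one (with equality exactly when all the weight sits on the minimizing index, which also makes the non-uniqueness of the weakest point transparent when $\theta_1$ is a repeated angle).
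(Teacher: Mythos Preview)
Your proposal is correct. Your first argument follows essentially the same route as the paper's proof: both reduce the inner minimization to the variational characterization of the minimal principal angle in~\eqref{eq:minimal_angle}, the paper by interpreting $\|\bm{S}_N'\bm{j}_N\|_2/\|\bm{j}_N\|_2$ geometrically as $\sin\angle\{\bm{j}_N,\bm{P}_N'\bm{j}_N\}$ and invoking the reciprocal-pair property, you by the cleaner identity $\|\bm{S}_N'\bm{j}_N\|_2^2 = 1-\|\bm{P}_N'\bm{j}_N\|_2^2$ together with the Cauchy--Schwarz step $\|\bm{P}_N'\bm{j}_N\|_2 = \max_{\bm{v}\in\mathcal{J}_N',\|\bm{v}\|_2=1}\bm{j}_N^T\bm{v}$, which makes the reduction to~\eqref{eq:minimal_angle} self-contained. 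Your second (bi-orthogonality) argument is a genuinely different and more informative route: using $\bm{P}_N'=\bm{V}\bm{V}^T$ and $\bm{V}^T\bm{U}=\Gamma$ to obtain the closed form $\lambda = a^2\sum_i \sin^2\theta_i\,\xi_i^2$ not only recovers $\lambda_{\min}=a^2\sin^2\theta_1$ as the minimum of a convex combination, but also exhibits the full dependence of $\lambda$ on all principal angles and makes the non-uniqueness of the weakest point under repeated $\theta_1$ transparent---structure the paper's proof does not expose.
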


\begin{proof}
Please refer to Appendix \ref{sec:app_proof_1}. 
\end{proof}

When $\theta_1 = \pi/2$, Proposition \ref{prop:min_eff} implies that the minimum non-centrality parameter is equal to $a^2$. As two subspaces are orthogonal if $\theta_1 = \pi/2$, Proposition \ref{prop:min_eff} is consistent with the maximum detection probability stated in Theorem 1 of \cite{lakshminarayana2021cost}.

In addition, as $\sin{\cdot}$ is monotonically increasing in $[0,\pi/2]$, Proposition \ref{prop:min_eff} demonstrates that the two-stage problem \eqref{eq:robust_opt} can be equivalently solved by one-stage maximisation:
\begin{equation}\label{eq:robust_opt_2}
    \begin{array}{cc}
         \max_{\Delta\bm{x}} & \theta_1 \\
        \text{s.t.} & \eqref{eq:reactance_limit_in}-\eqref{eq:reactance_limit_out}
    \end{array}
\end{equation}

To analytically represent $\theta_1$, a sequence of principal angles $\Theta = \{\theta_1,\theta_2,\dots,\theta_n\}$ can be defined iteratively by finding the orthonormal basis of $\mathcal{J}_N$ and $\mathcal{J}_N'$ such that for $i=2,\dots,n$ \cite{meyer2000matrix}:
\begin{equation}\label{eq:principal_angle}
            \cos \theta_{i}=\max _{\bm{u} \in \mathcal{J}_{N,i}, \bm{v} \in \mathcal{J}_{N,i}' \atop\|\bm{u}\|_{2}=\|\bm{v}\|_{2}=1} \bm{u}^{T} \bm{v}=\bm{u}_{i}^{T} \bm{v}_{i}
\end{equation}
where $\mathcal{J}_{N,i} = \bm{u}_{i-1}^{\perp} \cap \mathcal{J}_{N,i-1}$ and $\mathcal{J}_{N,i}'=\bm{v}_{i-1}^{\perp} \cap \mathcal{J}_{N,i-1}'$.

$\Theta$ can be separated into three parts. 
Let $\Theta_{1} = \{\theta_i|\theta_i = 0\}$,  $\Theta_{2} = \{\theta_i|0<\theta_i < \pi/2\}$, and $\Theta_{3} = \{\theta_i|\theta_i = \pi/2\}$ with cardinality equal to $k$, $r$, and $l$, respectively, and $n=k+r+l$.
The corresponding vectors $\bm{U} =  \{\bm{u}_1,\bm{u}_2,\dots,\bm{u}_n\}$ and $\bm{V} = \{\bm{v}_1,\bm{v}_2,\dots,\bm{v}_n\}$ are called principal vectors, which are the orthonormal basis of $\mathcal{J}_N$ and $\mathcal{J}_N'$, respectively. Similarly, $\bm{U}$ and $\bm{V}$ can also be separated into $\bm{U}_{1}, \bm{V}_{1},\cdots$. Specifically, $\bm{U}_1=\bm{V}_1=\mathcal{J}_N'\cap\mathcal{J}_N$ represents the intersection subspace of dimension $k$ and $l$ is the dimension of orthogonality. Furthermore, it is proved that there always exist semi-orthogonal matrices $\bm{U}$ and $\bm{V}$ for any $\mathcal{J}_N$ and $\mathcal{J}_N'$ such that the bi-orthogonality is satisfied \cite{galantai2008subspaces}:
\begin{equation}\label{eq:bithogonality}
    \bm{U}^T\bm{V} = \text{diag}([\cos{\theta_1}, \cos{\theta_2}, \dots, \cos{\theta_n}]) = \Gamma
\end{equation}

Since the orthogonal projector is uniquely defined \cite{meyer2000matrix} and also by \eqref{eq:bithogonality}, rewriting $\bm{P}_N = \bm{U}\bm{U}^T$ and $\bm{P}_N' = \bm{V}\bm{V}^T$ gives
\begin{equation}\label{eq:com_svd}
        \bm{P}_N\bm{P}_N' =\bm{U}\bm{U}^T\bm{V}\bm{V}^T =\bm{U}\Gamma \bm{V}^T
\end{equation}

Eq. \eqref{eq:com_svd} is the truncated singular value decomposition (t-SVD) on $\bm{P}_N\bm{P}_N'$ where the diagonal matrix $\Gamma$ contains the first $n$ largest singular values of $\bm{P}_N\bm{P}_N'$, and $\bm{U}$ and $\bm{V}$ are the first (left- and right-hand) $n$ singular vectors of $\bm{P}_N\bm{P}_N'$. As $\sigma(\bm{P}_N\bm{P}_N') = \{\bm{1}_k, \cos^2{\theta_{k+i}(i=1,\dots,r)},\bm{0}_{k+r+i}(i=1,\dots,l),\bm{0}_{n+i}(i=1,\dots,m-n)\}$, this t-SVD is an exact decomposition of $\bm{P}_N\bm{P}_N'$. 


Based on the t-SVD, Algorithm \ref{alg:vulnerable_point} is proposed to find the weakest point and the worst-case detection rate. For the grid with complete configuration, the composite matrix can be full column rank so that $k=0$. Line 6 outputs the weakest point $\bm{u}_1$ while line 9 outputs the empty intersection subspace. The worst-case detection rate is calculated according to Proposition \ref{prop:min_eff} in line 7. Practically, once the MTD strategy is determined, the weakest point $\bm{u}_1$ of this strategy can be directly spotted. Therefore, the system operator can evaluate the worst-case detection rate with respect to a maximum tolerable attack strength $|a|$. 

\begin{algorithm}[t]
    \footnotesize
    \SetKwInOut{Input}{Input}
    \SetKwInOut{Output}{Output}

    \Input{grid topology $\mathcal{G}(\mathcal{N},\mathcal{E})$, reactance perturbation $\Delta \bm{x}$, and attack strength $|a|$}
    \Output{weakest point $\bm{u}_{k+1}$, intersection subspace $\bm{U}_{1}$, and worst-case detection rate $f_{min}$}
    Construct the pre- and post- MTD measurement matrices $\bm{J}_N$ and $\bm{J}_N'$, respectively;
    
    Find the orthogonal projectors $\bm{P}_N$ and $\bm{P}_N'$ on $\bm{J}_N$ and $\bm{J}_N'$. Then do t-SVD \eqref{eq:com_svd};
    
    $rank = \text{rank}([\bm{J}_N,\bm{J}_N'])$; \tcc{Rank of the composite matrix.}
    
    $k = 2n - rank$; \tcc{The dimension of $\mathcal{J}_N'\cap\mathcal{J}_N$.}
    
    $\cos(\theta_{k+1}) = \Gamma(k+1,k+1)$;
    
    $\bm{u}_{k+1} = \bm{U}(k+1,k+1)$;  \tcc{The weakest point in $\mathcal{J}_N\setminus(\mathcal{J}_N'\cap \mathcal{J}_N)$.}
    
    $f_{\min} = f(a^2\sin^2(\theta_{k+1}))$; \tcc{The worst-case detection rate in $\mathcal{J}_N\setminus(\mathcal{J}_N'\cap \mathcal{J}_N)$.}
    
    \eIf{$rank = 2n$}
    {
    $\bm{U}_{1} = \varnothing$; \tcc{Complete MTD configuration.}
    }
    {
    $\bm{U}_{1} = \bm{U}(:,1:k)$; \tcc{Incomplete MTD configuration.}
    }
    \caption{Find the Weakest Point(s) and the Worst-Case Detection Rate}
    \label{alg:vulnerable_point}
\end{algorithm}

The t-SVD \eqref{eq:com_svd} also results in a solvable reformulation of \eqref{eq:robust_opt_2}. The worst-case detection rate can be maximised by the \textit{robust MTD} algorithm for the grid with complete configuration as follows:
\begin{equation}\label{eq:complete_mtd}
    \begin{array}{cc}
         \min_{\Delta\bm{x}} & {\|\bm{P}_N\bm{P}_N'\|_2} \\
        \text{s.t.} & \eqref{eq:reactance_limit_in}-\eqref{eq:reactance_limit_out}
    \end{array}
\end{equation}
where the property $\|\bm{P}_N\bm{P}_N'\|_2 = \sigma_{\max}(\bm{P}_N\bm{P}_N') = \cos(\theta_1)$ is used and $\|\bm{P}_N\bm{P}_N'\|_2\in[0,1]$. 

\begin{remark}
The robust MTD algorithm \eqref{eq:complete_mtd} requires sufficient placement of D-FACTS devices (as a planning stage problem) to guarantee $k=0$, e.g., using the `D-FACTS placement for the complete MTD' algorithm in \cite{liu2020optimal}.
\end{remark}

\subsection{Robust MTD for the Grid with Incomplete Configuration}

The robust MTD in \eqref{eq:complete_mtd} is not tractable for power system with incomplete MTD configuration. As $k\neq0$, $\theta_1 \equiv 0$ and $\|\bm{P}_N\bm{P}_N'\|_2 \equiv 1$ no matter how $\Delta \bm{x}$ is designed. Fig. \ref{fig:illustraion_3d} shows a three-dimensional incomplete-MTD case. The attack $\bm{a}_N$ in green shows a random attack attempt with non-zero $\lambda$. However, the weakest point $\text{Col}(\bm{u_1})$ is not trivial. As the attacker can possibly target $\text{Col}(\bm{u}_1)$, the worst-case detection rate is constantly equal to FPR. In addition to $\theta_1$, every attack in $\bm{U}_1$ is undetectable. The intersection can be regarded as the space of the weakest points, whose dimension is calculated as $k = 2n-\text{rank}([\bm{J}_N,\bm{J}_N'])\neq 0$. Therefore, the smallest non-zero principal angle (which also corresponds to the weakest point in $\mathcal{J}_N\setminus (\mathcal{J}_N'\cap\mathcal{J}_N)$) can be found as $\theta_{k+1}$ in line 5 of Algorithm \ref{alg:vulnerable_point} with the minimum detection rate calculated in line 7. Meanwhile, $\bm{U}_1$, corresponding to the subspace that cannot be detected, is calculated in line 11. 
\begin{figure}
    \centering
    \includegraphics[width=0.24\textwidth]{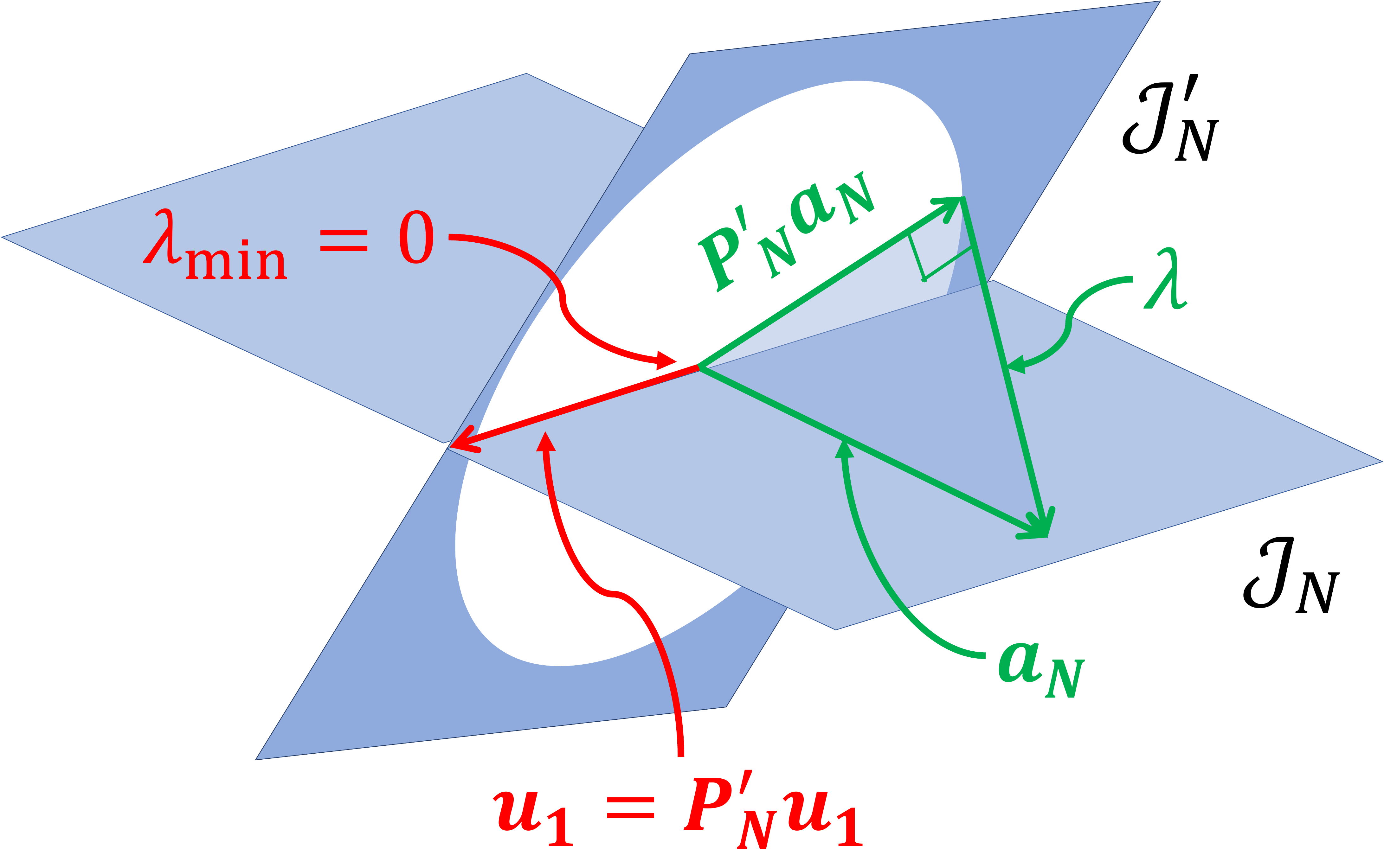}
    \caption{An illustration on the grid with incomplete configuration, $\mathcal{J}_N,\mathcal{J}_N'\subset \mathbb{R}^3$.}
    \label{fig:illustraion_3d}
\end{figure}

To solve the intractable problem, the following design principles are considered which can improve the robust performance of MTD with incomplete configuration: 

\textbf{\textit{Principle 1}}: Minimise $k$, the dimension of the intersection.

\textbf{\textit{Principle 2}}: The attacker shall not easily attack on the intersection subspace $\bm{U}_1$ by chance.

\textbf{\textit{Principle 3}}: Maximise $\theta_{k+1}$, the minimum nonzero principal angle in $(\mathcal{J}_N,\mathcal{J}_N')$. 

Each of the principles is discussed as follows.

\textbf{\textit{Principle 1:}} The idea of Principal 1 is to minimise the attack space that can never be detected by MTD so that the probability of detectable FDI attacks increases. Minimising $k$ is a planning stage problem as the rank of the composite matrix is almost not related to the perturbation amount of the D-FACTS devices once they have been deployed \cite{zhang2019analysis}. 
In this paper, we propose a new D-FACTS device placement algorithm to achieve the minimum $k$. Compared with the existing work \cite{liu2020optimal, zhang2019analysis, liu2018reactance}, our algorithm uses the BLOSSOM algorithm \cite{galil1986efficient} to find the maximum cardinality matching \cite{bondy2008graph} of $\mathcal{G}(\mathcal{N},\mathcal{E})$, which can reach all necessary buses with the smallest number of D-FACTS devices. More details are presented in Appendix \ref{sec:deployment}. 

\textbf{\textit{Principle 2:}} From the robust consideration, the following lemma is derived for the attacks targeting on the weakest point(s) for the grid with incomplete MTD configuration.

\begin{lemma}\label{lemma:vul}
Let $\bm{U} = (\bm{U}_{1},\bm{U}_{2,3})$ where $\bm{U}_{2,3}$ is the collection of columns in $\bm{U}_2$ and $\bm{U}_3$. Let $\bm{a}_N = \bm{U}_{1}\bm{c}_{1} + \bm{U}_{2,3}\bm{c}_{2,3}$ with $\bm{c}_{1}\in\mathbb{R}^{k}$ and $\bm{c}_{2,3}\in\mathbb{R}^{r+l}$. The detection rate on $\bm{a}_N$ does not depend on the value of $\bm{c}_{1}$.
\end{lemma}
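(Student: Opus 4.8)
The plan is to compute the non-centrality parameter $\lambda$ directly and show that the $\bm{c}_1$-component contributes nothing to it. Recall that, after normalization, the detection rate on an attack $\bm{a}_N$ is the monotone function $f(\lambda)$ of the non-centrality parameter $\lambda = \|\bm{S}_N'\bm{a}_N\|_2^2$, where $\bm{S}_N' = \bm{I} - \bm{P}_N'$ is the orthogonal projector onto $\mathrm{Ker}(\bm{J}_N'^T)$ and $\bm{P}_N' = \bm{V}\bm{V}^T$ in terms of the principal vectors. So it suffices to show that $\|\bm{S}_N'\bm{a}_N\|_2^2$ is independent of $\bm{c}_1$ whenever $\bm{a}_N = \bm{U}_1\bm{c}_1 + \bm{U}_{2,3}\bm{c}_{2,3}$.

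First I would use the fact, established around \eqref{eq:bithogonality}, that $\bm{U}_1 = \bm{V}_1$ is exactly the intersection subspace $\mathcal{J}_N \cap \mathcal{J}_N'$ associated with the zero principal angles in $\Theta_1$. Since the columns of $\bm{U}_1$ lie in $\mathcal{J}_N' = \mathrm{Col}(\bm{J}_N')$, we have $\bm{P}_N'\bm{U}_1 = \bm{U}_1$, hence $\bm{S}_N'\bm{U}_1 = \bm{0}$. Consequently $\bm{S}_N'\bm{a}_N = \bm{S}_N'\bm{U}_1\bm{c}_1 + \bm{S}_N'\bm{U}_{2,3}\bm{c}_{2,3} = \bm{S}_N'\bm{U}_{2,3}\bm{c}_{2,3}$, which manifestly does not involve $\bm{c}_1$. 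Therefore $\lambda = \|\bm{S}_N'\bm{U}_{2,3}\bm{c}_{2,3}\|_2^2$ depends only on $\bm{c}_{2,3}$, and since the detection rate is the function $f(\lambda)$ of this quantity alone, the claim follows.

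The only genuine point to nail down — the part I expect to need the most care — is the identification $\bm{U}_1 = \mathcal{J}_N \cap \mathcal{J}_N'$ and the resulting relation $\bm{S}_N'\bm{U}_1 = \bm{0}$. This rests on the characterization of principal vectors: $\theta_i = 0$ forces $\bm{u}_i = \bm{v}_i$ by \eqref{eq:minimal_angle}–\eqref{eq:principal_angle} (the maximizing unit vectors coincide when their inner product is $1$), and the span of these coincident vectors is precisely the intersection of the two subspaces (this is the standard principal-angle fact cited via \cite{meyer2000matrix, galantai2008subspaces}, and the excerpt already asserts $\bm{U}_1 = \bm{V}_1 = \mathcal{J}_N' \cap \mathcal{J}_N$). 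Once that is in hand, everything else is a one-line projector computation. I would also remark that the residual distribution under $\bm{a}_N$ is the non-central $\chi^2_{m-n}(\lambda)$ law with the single parameter $\lambda$, so "detection rate" is unambiguously a function of $\lambda$ only, completing the argument.
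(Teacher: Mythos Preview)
Your proof is correct and follows essentially the same line as the paper's: both compute the non-centrality parameter $\lambda = \|\bm{S}_N'\bm{a}_N\|_2^2$ and use the identification $\bm{U}_1 = \bm{V}_1 = \mathcal{J}_N\cap\mathcal{J}_N'$ to conclude that the $\bm{c}_1$-component is annihilated, leaving $\lambda$ a function of $\bm{c}_{2,3}$ only. The paper reaches $\bm{S}_N'\bm{U}_1 = \bm{0}$ via the bi-orthogonality relation $\bm{V}^T\bm{U} = \Gamma$ (writing $(\bm{I}-\bm{V}\bm{V}^T)\bm{U} = \bm{U} - \bm{V}\Gamma$ and noting the first block is $\bm{U}_1 - \bm{V}_1 = \bm{0}$), whereas you invoke the projector identity $\bm{P}_N'\bm{U}_1 = \bm{U}_1$ directly; these are the same computation.
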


\begin{proof}
Please refer to Appendix \ref{sec:proof_vul}.
\end{proof}

Although the attackers cannot immediately know the exact $\bm{x}'$ (Assumption 1), Lemma \ref{lemma:vul} suggests that the MTD algorithm should be designed such that the attackers cannot easily attack on $\bm{U}_1$ by chance. Specifically, considering the attack targeting a single state $i$, if $\text{Col}(\bm{J}_N(:,i))\subseteq	\bm{U}_1$, the single-state attack on the bus $i$ can bypass the MTD while any attack involving bus $i$ can be detected ineffectively. To avoid ineffective MTD on 
this attack, 
the following constraint is considered.
\begin{equation}\label{eq:idle_mtd}
    \|\bm{P}_N^i\bm{P}_N'\|_2 \geq \gamma_i, \quad \forall i\in\mathcal{N}^c
\end{equation}
where $\bm{P}_N^i = \left(\bm{J}_N(:,i)^T\bm{J}_N(:,i)\right)^{-1}\bm{J}_N(:,i)\bm{J}_N(:,i)^T$ is the orthogonal projector on $\text{Col}(\bm{J}_N(:,i))$. $\mathcal{N}^c$ represents the index set of buses that are included in at least a loop\footnote{As proved by \cite{zhang2020hiddenness}, if a bus is not included in any loop, attacks on this bus cannot be detected regardless of the MTD strategies.} of $\mathcal{G}$. Since $\|\bm{P}_N^i\bm{P}_N'\|\in[0,1]$ and 1 is achieved when $\text{Col}(\bm{J}_N(:,i))\subseteq\bm{U}_1$, the threshold $\gamma_i$ can be set close but not equal to 1. 

Notice that the constraint in \eqref{eq:idle_mtd} cannot eliminate the weakest point(s) nor improve the worst-case detection rate on $\bm{U}_1$, but it can restrict the attacker's knowledge on the weakest point(s). Rewriting $\lambda$ as $\lambda = \|(\bm{I}-\bm{P}_N')\sum_{i=1}^n\bm{J}_N(:,i)\bm{c}(i)\|_2^2$, constraint \eqref{eq:idle_mtd} ensures that $(\bm{I}-\bm{P}_N')\bm{J}_N(:,i)\bm{c}(i)\neq0$, $\forall i \in \mathcal{N}^c$. To have low MTD detection rate, the attacker has to coordinate the attack strength on at least two buses to have low $\lambda$ which is only possible if $\bm{x}'$ is known. As long as the attacker cannot easily attack $\bm{U}_1$, the probability of having the worst case is low and the MTD strategy is still 
effective from a robust point of view.

\begin{remark}
To fulfill constraint \eqref{eq:idle_mtd}, all buses in $\mathcal{N}^c$ should be incident to at least a branch equipped with D-FACTS devices, which can be achieved by the proposed D-FACTS devices placement algorithm in Appendix \ref{sec:deployment}.
\end{remark}

\textbf{\textit{Principle 3:}} Although the chance of the worst-case attack is minimized by Principle 1-2, it does not necessarily imply a high detection rate when $\bm{a}_N \notin \bm{U}_1$. Similarly to \eqref{eq:robust_opt_2}, the minimum non-zero principal angle $\theta_{k+1}$, which represents the weakest point in subspace $\mathcal{J}_N\setminus(\mathcal{J}_N'\cap\mathcal{J}_N)$ should be maximised by
\begin{equation}\label{eq:incomplete_0}
    \begin{array}{cl}
         \min_{\Delta\bm{x}} & {\cos{\theta_{k+1}}} \\
        \text{s.t.} & \eqref{eq:reactance_limit_in}-\eqref{eq:reactance_limit_out}, \eqref{eq:idle_mtd}
    \end{array}
\end{equation}
where $\cos{\theta_{k+1}}$ is the $(k+1)$th largest singular value. 

To our knowledge, there is no direct method to solve \eqref{eq:incomplete_0} as finding the singular value at a certain position requires solving the SVD of $\bm{P}_N\bm{P}_N'$ and locating the $1$th to $k$th singular vectors. Therefore, we propose an iterative Algorithm \ref{alg:incomplete} to solve \eqref{eq:incomplete_0}. In line 1 of Algorithm \ref{alg:incomplete}, a warm start $\Delta \bm{x}^0$ is firstly found by minimising the Frobenius norm $\|\cdot\|_F$, which is shown to be an upper bound to $\cos{\theta_{k+1}}$.
\begin{equation}\label{eq:incomplete_fro}
    \begin{array}{cl}
         \min_{\Delta\bm{x}} & {\|\bm{P}_N\bm{P}_N'\|_F} \\
        \text{s.t.} & \eqref{eq:reactance_limit_in}-\eqref{eq:reactance_limit_out}, \eqref{eq:idle_mtd}
    \end{array}
\end{equation}

For a given warm-start perturbation $\Delta \bm{x}^0$, the intersection subspace $\bm{U}_1$ can be located by Algorithm \ref{alg:vulnerable_point}. Denoting $\bm{U}_1(\Delta \bm{x}^0)$ as $\bm{U}_1^0$, the t-SVD \eqref{eq:com_svd} can be rewritten as
\begin{equation*}
\begin{array}{rl}
    \bm{P}_N\bm{P}_N' = & \left(\bm{U}_1^0, \bm{U}_{2,3}\right)\begin{pmatrix}
    \bm{I} & \bm{0}\\
    \bm{0} & \Gamma_{2,3}
\end{pmatrix}\begin{pmatrix}
    \bm{V}_{1}^{0T}\\
    \bm{V}_{2,3}^T
\end{pmatrix}  \\
    = & \bm{U}_{1}^0\bm{U}_{1}^{0T} + \bm{U}_{2,3} \Gamma_{2,3} \bm{V}_{2,3}^T
\end{array}
\end{equation*}
where $\bm{I}$ is the identity matrix of dimension $k$; $\Gamma_{2,3} = \text{diag}([\cos(\theta_{k+1}),\cdots,\cos(\theta_{n}]))$ with $\theta_{k+1}\neq0$. Note that $\bm{U}_1^0 = \bm{V}_1^0 = \mathcal{J}_N'\cap\mathcal{J}_N$.

Therefore, the following optimisation problem can be formulated to minimise $\cos{\theta_{k+1}}$: 
\begin{equation}\label{eq:incomplete_1}
    \begin{array}{cl}
         \min_{\Delta\bm{x}} & \|\bm{P}_N\bm{P}_N' - \bm{U}_{1}^0\bm{U}_{1}^{0T}\|_2 \\
        \text{s.t.} & \eqref{eq:reactance_limit_in}-\eqref{eq:reactance_limit_out}, \eqref{eq:idle_mtd}
    \end{array}
\end{equation}

Denoting the optimal value of \eqref{eq:incomplete_1} as $\Delta \bm{x}^1$, a new intersection subspace $\bm{U}_1^1= \bm{U}_1(\Delta\bm{x}^1)$ can be located. As $\Delta\bm{x}^1$ is solved with fixed $\bm{U}_1^0$, $\bm{U}_1^1$ may not be the same as $\bm{U}_1^0$. After finding the new intersection subspace from $\Delta \bm{x}^1$, \eqref{eq:incomplete_1} can be iteratively solved until convergence, as shown by line 3-11 in Algorithm \ref{alg:incomplete}.

\begin{algorithm}[t]
    \footnotesize
    \SetKwInOut{Input}{Input}
    \SetKwInOut{Output}{Output}

    \Input{grid topology $\mathcal{G}(\mathcal{N},\mathcal{E})$, terminating tolerance $tol$, maximum iteration number $max\_ite$}
    \Output{reactance perturbation $\Delta\bm{x}^1$}
    Find the warm start point $\Delta \bm{x}^0$ by solving \eqref{eq:incomplete_fro};
    
    Find the intersection subspace $\bm{U}_{1}^0$ by Algorithm \ref{alg:vulnerable_point};
    
    \tcc{iteration until convergence.}
    
    \While{$step<max\_ite$}
    {
    Find $\Delta\bm{x}^1$ by solving \eqref{eq:incomplete_1};
    
    Find the intersection subspace $\bm{U}_{1}^1$ by Algorithm \ref{alg:vulnerable_point};
    
    \eIf{$\|\bm{U}_1^1-\bm{U}^0_1\|_2 \leq tol$}
    {
    break; \tcc{converged.}
    }
    {
    $\bm{U}_{1}^0:=\bm{U}_{1}^1$;
    }
    }
    \caption{Robust MTD for the Grid with Incomplete Configuration}
    \label{alg:incomplete}
\end{algorithm}

To sum up, Algorithm \ref{alg:incomplete} limits the chance of attacking on $\mathcal{J}_N'\cap\mathcal{J}_N$ (Principal 1-2) and guarantees the worst-case detection rate in $\mathcal{J}_N\setminus(\mathcal{J}_N'\cap\mathcal{J}_N)$ (Principal 3 and \eqref{eq:incomplete_fro}-\eqref{eq:incomplete_1}) for the grid with incomplete configuration.

\subsection{Discussions on Full AC Model Design}

\label{sec:discussion}

In previous sections, we theoretically established the robust MTD algorithm based on the simplified AC model \eqref{eq:linear_equation}-\eqref{eq:linear_attack}. There exists similar concept on the weakest point in the original AC settings \eqref{eq:power_balance}-\eqref{eq:ac_se}. Let $\bm{h}'^{-1}(\cdot)$ represent the result of state estimation in \eqref{eq:ac_se}. The estimated state on attacked measurement is written as $\hat{\bm{\nu}}_a' = \bm{h}'^{-1}(\bm{z}_a')$ and the residual is $\gamma_a'=\|\bm{R}^{-\frac{1}{2}}(\bm{z}_a' - \bm{h}'(\hat{\bm{\nu}}_a'))\|_2^2$. The weakest point can be defined as a unitary attack vector such that $\gamma_a'$ is minimised. However, there are several obstacles to analytically writing its expression. Firstly, recall that $\bm{a} = \bm{h} (\bm{\nu}'+\bm{c}) - \bm{h} (\bm{\nu}')$ which is non-linearly dependent on the post-MTD state $\bm{\nu}'$ and the state attack vector $\bm{c}$. Note that $\bm{\nu}'$ is dependent on $\bm{x}'$ which cannot be determined in advance. Second, $\bm{h}'^{-1}(\cdot)$ requires an iterative update, such as the Gauss-Newton or Quasi-Newton algorithm. Although it is possible to reformulate AC-SE as semi-definite programming \cite{zhu2014power}, it lacks of analytical solution in general. Third, it is difficult to define the concept of angles between subspaces defined by two functions $\bm{h}(\cdot)$ and $\bm{h}'(\cdot)$. Consequently, we theoretically derived the robust algorithm based on the simplified AC model and numerically verify the performance on AC-FDI attacks in simulation. We found out that the MTD designed by the sufficient separation between the subspaces between the real-time Jacobian matrices can provide effective detection in the full AC model.



\section{Simulation}

\label{sec:simulation}

\subsection{Simulation Set-ups}

We test the proposed algorithms on IEEE benchmarks case-6, case-14, and case-57 in MATPOWER \cite{zimmerman2011matpower}. AC-OPF is solved using the Python package PYPOWER 5.1.15. and the nonlinear optimisation problems are solved using the open source library SciPy. More simulation setups are given below.

\subsubsection{Attack Pools and BDD Threshold}
Firstly, we define the attack strength with respect to the noise level as:
\begin{equation}\label{eq:att_rho}
    \rho = \frac{\|\bm{a}\|_2}{\sqrt{\sum_i^m \sigma_i^2}}
\end{equation}

We consider three types of attacks for the simplified AC model. 1). \textbf{\textit{Worst-case attack}} where the attacker attacks on the weakest point $\bm{u}_{k+1}$ of a given MTD strategy according to Algorithm \ref{alg:vulnerable_point}; 2). \textbf{\textit{Single-state attack}} where the attacker only injects on single non-reference phase angle; and 3). \textbf{\textit{Random attack}} where the attack vector $\bm{a}$ is randomly generated as follows. First, the number of attacked state $\|\bm{c}\|_0 = q$ is drawn uniformly from set $\{1,2,\dots,n\}$. $\bm{c}$ is then sampled from multivariate Gaussian distribution with $q$ non-zero entries. Second, the attack vector is found as $\bm{a}=\bm{J}\bm{c}$ and rescaled by different $\rho = 5,7,10,15,20$ according to \eqref{eq:att_rho}. To simplify the analysis, the measurement noise is set as $\sigma_i = 0.01p.u., \forall i$ in all case studies. In this case, to have $\beta$-MTD, the necessary condition is $\rho\geq\sqrt{\lambda_c(\beta)/m}$ according to Proposition \ref{prop:min_att}.

In the original AC model, the measurement consists of $P_i$, $Q_i$, $P_{k:i\to j}$, and $Q_{k:i\to j}$ \eqref{eq:power_balance}, which are nonlinearly dependent on $\bm{\theta}$. Therefore, we randomly sample $\bm{c}$ from uniform distribution and classify $\bm{a} = \bm{h}'(\bm{\nu}'+\bm{c}) - \bm{h}'(\bm{\nu}')$ into one of the ranges $\{[5,7), [7,10), [10,15), [15,20), [20,25), [25,\infty)\}$ by \eqref{eq:att_rho}.

We sample \texttt{no\_load}=50 load conditions on a uniform distribution of the default load profile in MATPOWER \cite{zimmerman2011matpower} for each grid. We then set the D-FACTS devices using different MTD algorithms and simulate the real-time measurements. Under each load condition, we generate \texttt{no\_attack}=200 attack attempts for each of the attack types. The BDD threshold $\tau_\chi(\alpha)$ is determined with $\alpha=5\%$ FPR.

\subsubsection{Metrics and Baselines} 

The key metric to evaluate the MTD detection performance is the true positive rate, also known as the attack detection probability (ADP), which is the ratio between the number of attacks that are detected by the MTD detector and the total number of attacks.

The max-rank MTD algorithm modified from \cite{liu2018reactance, liu2020optimal, zhang2019analysis, zhang2021strategic} is compared as the baseline where reactances are randomly changed with $\mu_{min}\bm{x}_i \leq |\Delta \bm{x}_i|\leq \mu_{max}\bm{x}_i$. Note that each reactance is perturbed by $\mu_{min}>0$ to fulfil the max-rank condition on the composite matrix. For each attempt of attack \texttt{no\_attack}, we simulate \texttt{no\_maxrank} = 20 MTDs of maximum rank to evaluate their average detection performance.

\subsection{Verification of Theoretical Analysis on Simplified AC Model}

\begin{figure}
     \centering
     \begin{subfigure}[b]{0.24\textwidth}
         \centering
         \includegraphics[width=\textwidth]{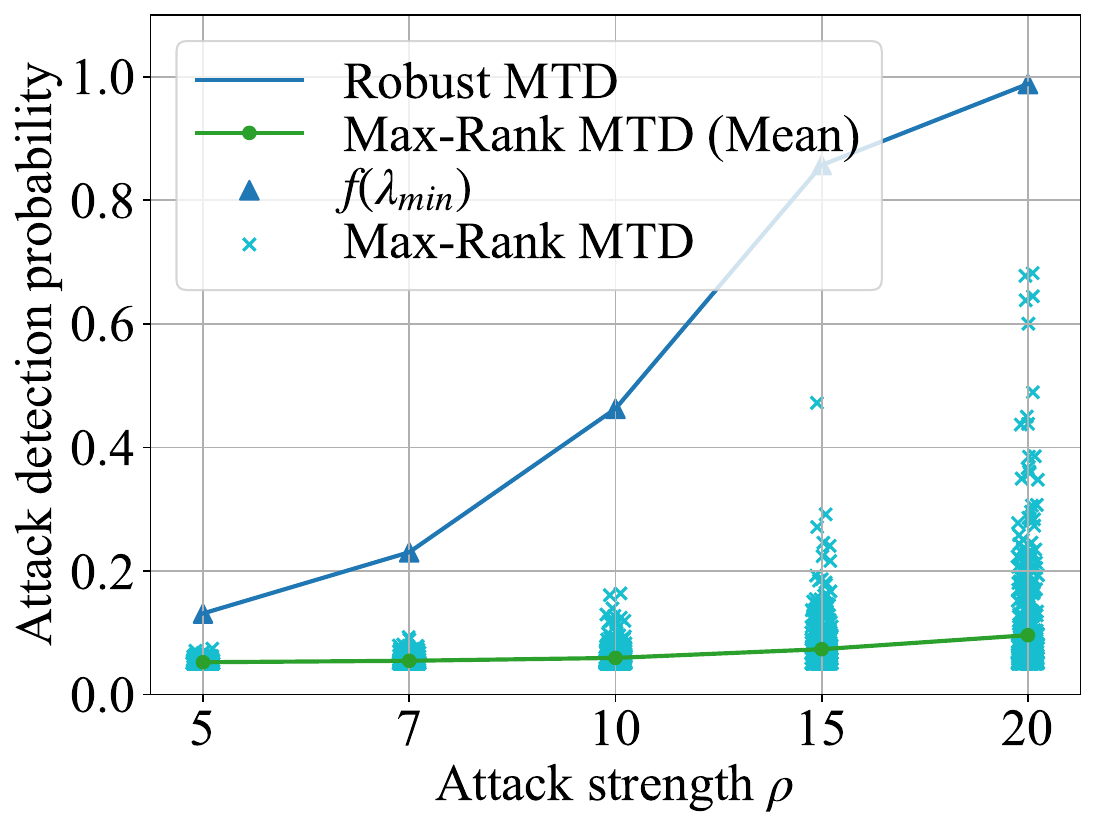}
         \caption{Worst-Case Attack}
     \end{subfigure}
     \hfill
     \begin{subfigure}[b]{0.24\textwidth}
         \centering
         \includegraphics[width=\textwidth]{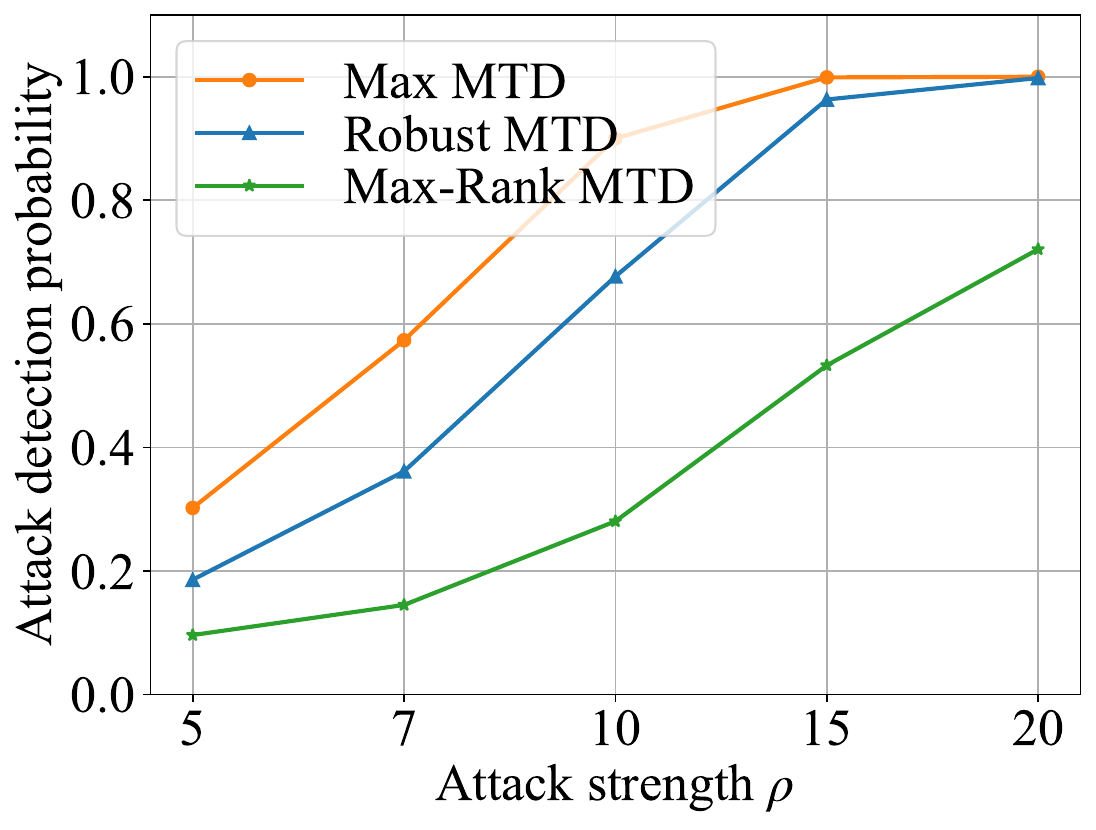}
         \caption{Random Attack}
     \end{subfigure}
        \caption{ADPs on simplified case-6 system.  }
        \label{fig:case6}
\end{figure}

\begin{figure}
     \centering
     \begin{subfigure}[b]{0.24\textwidth}
         \centering
         \includegraphics[width=\textwidth]{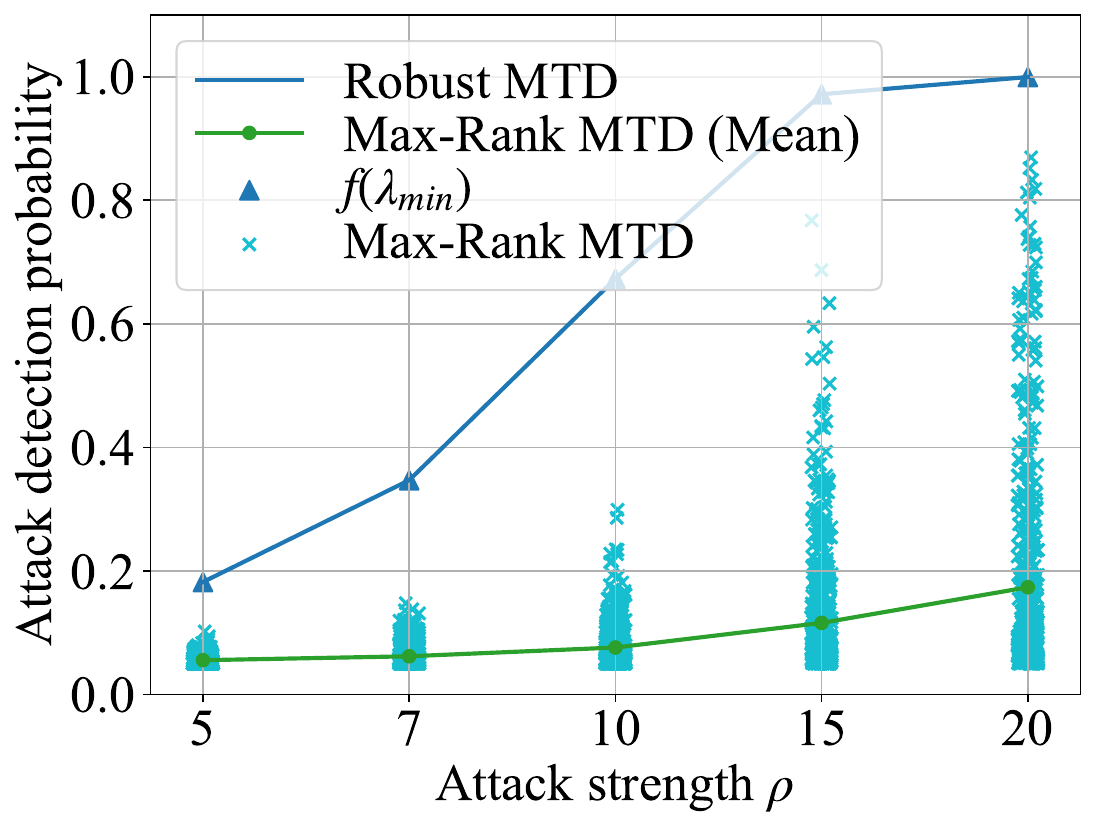}
         \caption{Worst-Case Attack}
     \end{subfigure}
     \hfill
     \begin{subfigure}[b]{0.24\textwidth}
         \centering
         \includegraphics[width=\textwidth]{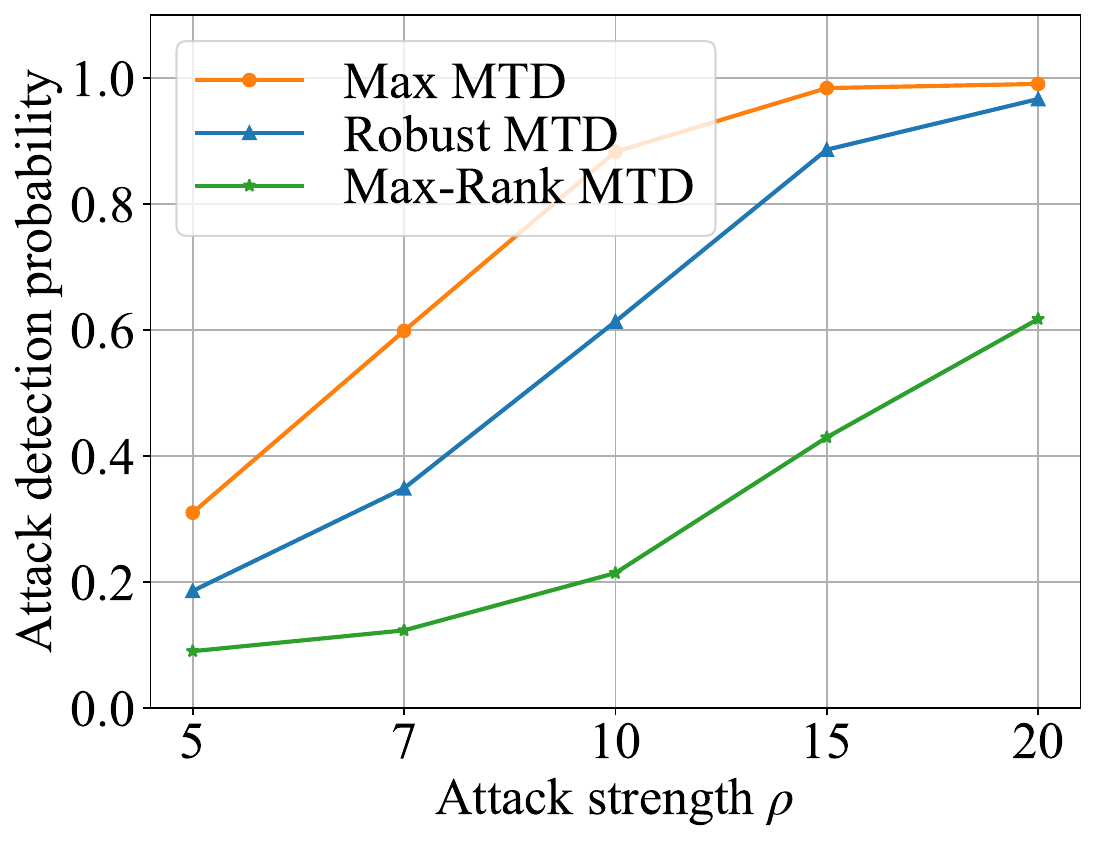}
         \caption{Random Attack}
     \end{subfigure}
        \caption{ADPs on simplified case-14 system.  }
        \label{fig:case14}
\end{figure}

In the first case study, we verify the theoretical analysis of robust MTD algorithms and demonstrate their effectiveness in the simplified AC model \eqref{eq:linear_equation}-\eqref{eq:linear_attack}. 

First, the ADPs of case-6 with complete configuration are illustrated in Fig. \ref{fig:case6} for both worst-case attacks and random attacks. The reactances are changed with
$\tau = 0.2$. Meanwhile, $\mu_{min} = 0.05$ and $\mu_{max} = 0.2$ in the max-rank MTD. In Fig. \ref{fig:case6}(a), the simulation result on the ADPs of robust MTD is the same as the theoretic detection rate $f(\lambda_{min})$ calculated by Proposition \ref{prop:min_eff}, which verifies the theoretic analysis and the design criteria. In addition, the robust MTD algorithm shows much higher ADPs than the max-rank MTD on the worst-case attack. Although the max-rank MTD's performance may approach the robust MTD in some cases, its average ADP is similar to the FPR as the worst-case performance cannot be explicitly considered under the noiseless setting.   

In Fig. \ref{fig:case6}(b), the max MTD is added by solving \eqref{eq:max_mtd} with the assumption that the attack vector $\bm{a}_N$ is known, which represents the performance upper-bound of any MTD design. As shown by Fig. \ref{fig:case6}(b), the robust MTD algorithm, not only guarantees the worst case condition, but also outperforms the max-rank MTD by 10\%-45\% on random attacks with different $\rho$. 
Moreover, the gap between robust MTD and max MTD algorithms is smaller than 25\% and approaches to zero when $\rho\geq15$. However, comparing Fig. \ref{fig:case6}(a) and Fig. \ref{fig:case6}(b), it is worth noting that the major improvement of robust MTD over max-rank MTD still lies in the worst-case attacks.

Fig. \ref{fig:case14} investigates the performance on the case-14 system with incomplete configuration. By Algorithm \ref{alg:vulnerable_point}, the minimum $k$ is equal to 6 and the worst point in $\mathcal{J}_N\setminus(\mathcal{J}_N'\cap\mathcal{J}_N)$ is at $\bm{u}_7$. Assume that all branches are equipped with D-FACTS devices and the maximum perturbation ratio is set as $\tau = 0.2$. Although the detection rates on attacks in $\bm{U}_1$ are equal to $\alpha$ according to Lemma \ref{lemma:vul}, the ADP on $\bm{u}_7$ is nonzero by implementing Algorithm \ref{alg:incomplete} and increases as the strength of the attack increases. Similar to Fig. \ref{fig:case6}(a), although the max-rank MTD algorithm can, by chance, give a high detection rate against the worst-case attack, its average detection rate is extremely low. In Fig. \ref{fig:case14}(b), the gap between the max MTD and the robust MTD is also small (5\%-30\%). The results demonstrate that robust design can also effectively improve the detection performance for the grid with incomplete configuration. 


To further investigate on the weakest points in $\bm{U}_1$, we generate single-bus attack with $\rho = 10$ and record the ADPs in Fig. \ref{fig:single_state_attack} with and without Principle 2 \eqref{eq:idle_mtd}. First, attacks targeting bus-8 can only be detected by 5\%. This is because bus-8 is a degree-one bus which is excluded by any loop. Second, with Principle 2 considered, the robust MTD can give more than 90\% ADPs for all buses. In contrast, there are attacks against certain buses, e.g. bus-7, 10, 11, and 13 can be barely detected without Principle 2. 
Consequently, the simulation result verifies that Principle 2 can sufficiently reduce the chance of attacking on the weakest points. 

\begin{figure}
     \centering
     \includegraphics[width=0.24\textwidth]{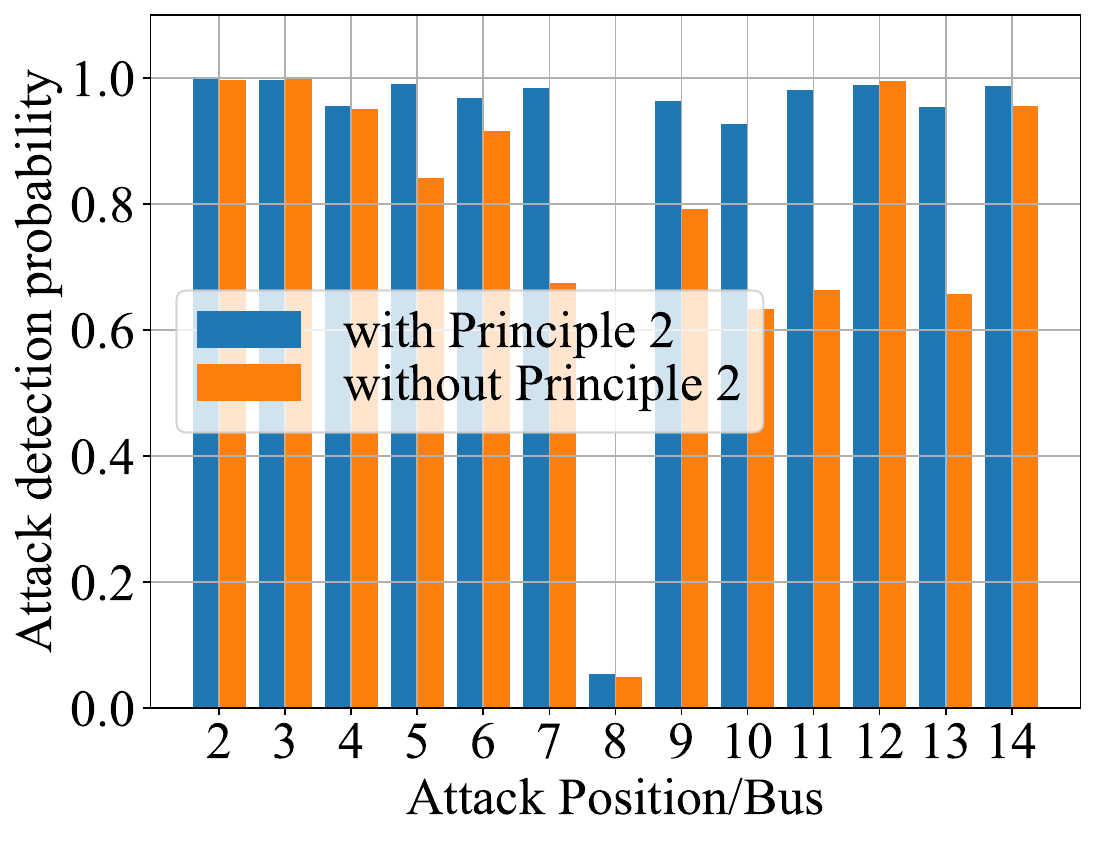}
    \caption{ADPs on single-state attacks of case-14 system.}
    \label{fig:single_state_attack}
\end{figure}

\subsection{Simulation Results on Full AC Model}

In this section, we verify the detection effectiveness of the proposed robust MTD algorithms on FDI attacks under the original AC settings \eqref{eq:power_balance}-\eqref{eq:ac_se}.

\subsubsection{Random Attack}

Random attacks ADPs for the full-AC cases-6, case-14, and case-57 systems are summarised in Table \ref{tab:random_ac}. Similar to studies on simplified AC models, the proposed robust algorithms can improve ADPs by 10\%-40\% compared with the max-rank algorithm. In particular, for cases with attack strength below 20, robust MTD can almost double the ADPs of max-rank MTD for all three systems. Therefore, the robust MTD designed by the principal angles between the subspaces of pre- and post- MTD Jacobian matrices are still effective on defending AC-FDI attacks. In addition, the
attacks with larger attack strength are more likely to be detected while the detection probability for different systems under the same attack strength is slightly different due to their different load levels, parameters (e.g. the reactance to resistance ratios), and topologies. For instance, case-57 system is harder to detect as the ADPs in both max-rank and robust MTDs are lower than the case-6 and case-14 systems. 

To confirm detection performance, the residual distributions for the three systems are summarised in Fig. \ref{fig:ac_result} where kernel density estimation is used to smooth the histograms. 
The result implies that the proposed algorithms can generalise well to the AC-FDI attacks by sufficiently shifting the distribution positively, which is shown to be a key property on effective MTD with the measurement noise considered in Fig. \ref{fig:residual_illustration}. For each sub-figure, the max-rank MTD performs worse than the robust MTD on average as well. 

\begin{table}
\footnotesize
\caption{Average ADPs on random AC-FDI attacks. Max-Rk represents the max-rank MTD, and Robust represents the robust MTD.}
  \begin{tabular}{c|cccccc}
    \hline
    \multirow{2}{*}{$\rho$} &
      \multicolumn{2}{c}{\textbf{case-6}} &
      \multicolumn{2}{c}{\textbf{case-14}} &
      \multicolumn{2}{c}{\textbf{case-57}} \\
    & Max-Rk & Robust & Max-Rk & Robust & Max-Rk & Robust \\
    \hline\hline
    $[5,7)$   & 7.1\%  & 13.7\% &  8.6\% & 18.1\% & 10.3\% & 30.3\% \\
    \hline
    $[7,10)$  & 12.6\% & 33.2\% & 14.4\% & 41.2\% & 15.2\% & 39.2\% \\
    \hline
    $[10,15)$ & 25.1\% & 67.3\% & 27.5\% & 63.1\% & 23.7\% & 55.9\% \\
    \hline
    $[15,20)$ & 44.5\% & 92.4\% & 43.4\% & 87.5\% & 36.0\% & 69.1\%\\
    \hline
    $[20,25)$ & 60.2\% & 98.2\% & 60.6\% & 94.5\% & 50.6\% & 81.6\%\\
    \hline
  \end{tabular}
  \label{tab:random_ac}
\end{table}

\begin{figure}
     \centering
     \begin{subfigure}[b]{0.24\textwidth}
         \centering
         \includegraphics[width=\textwidth]{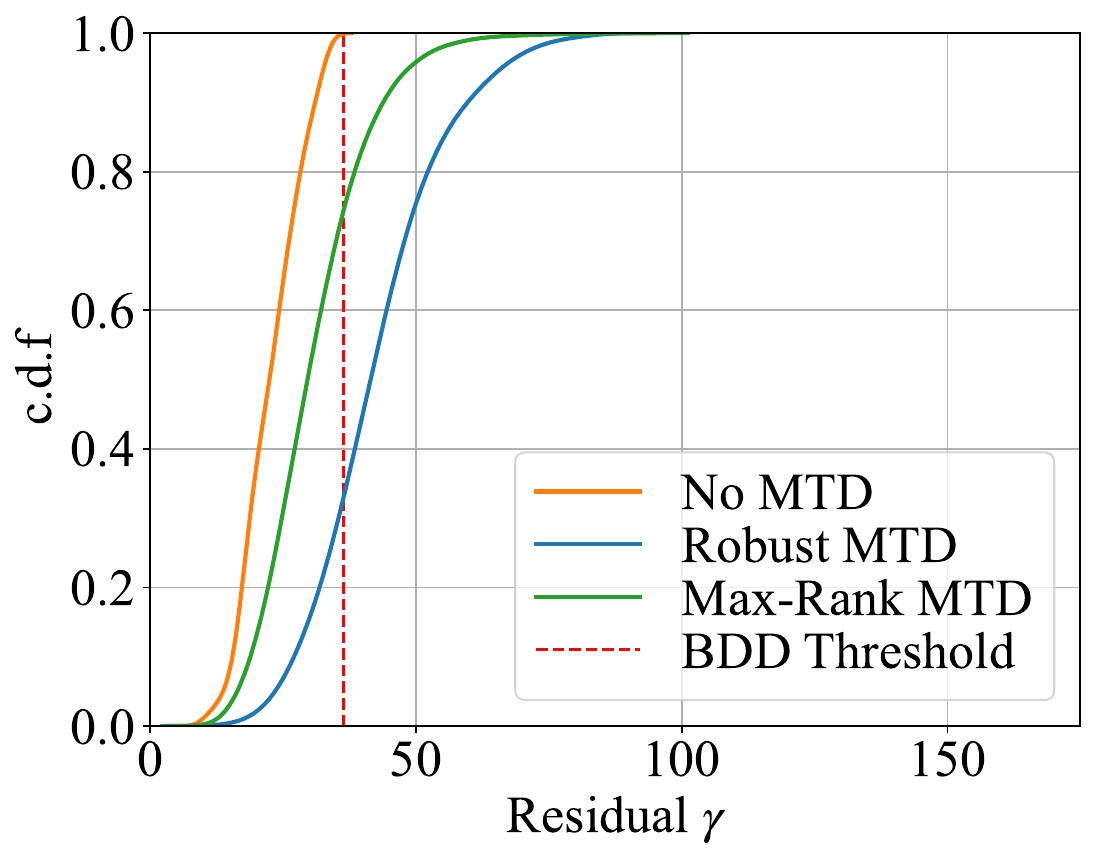}
     \end{subfigure}
     \hfill
     \begin{subfigure}[b]{0.24\textwidth}
         \centering
         \includegraphics[width=\textwidth]{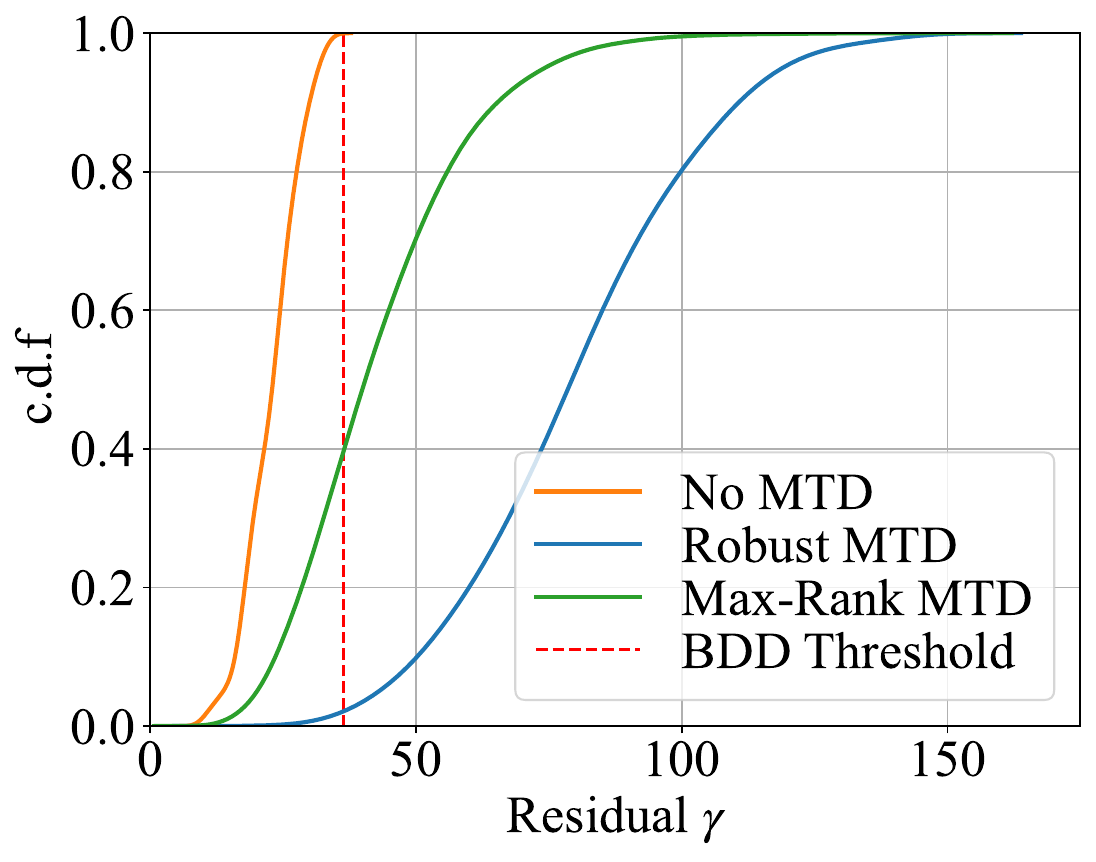}
     \end{subfigure}
     \hfill
     \begin{subfigure}[b]{0.24\textwidth}
         \centering
         \includegraphics[width=\textwidth]{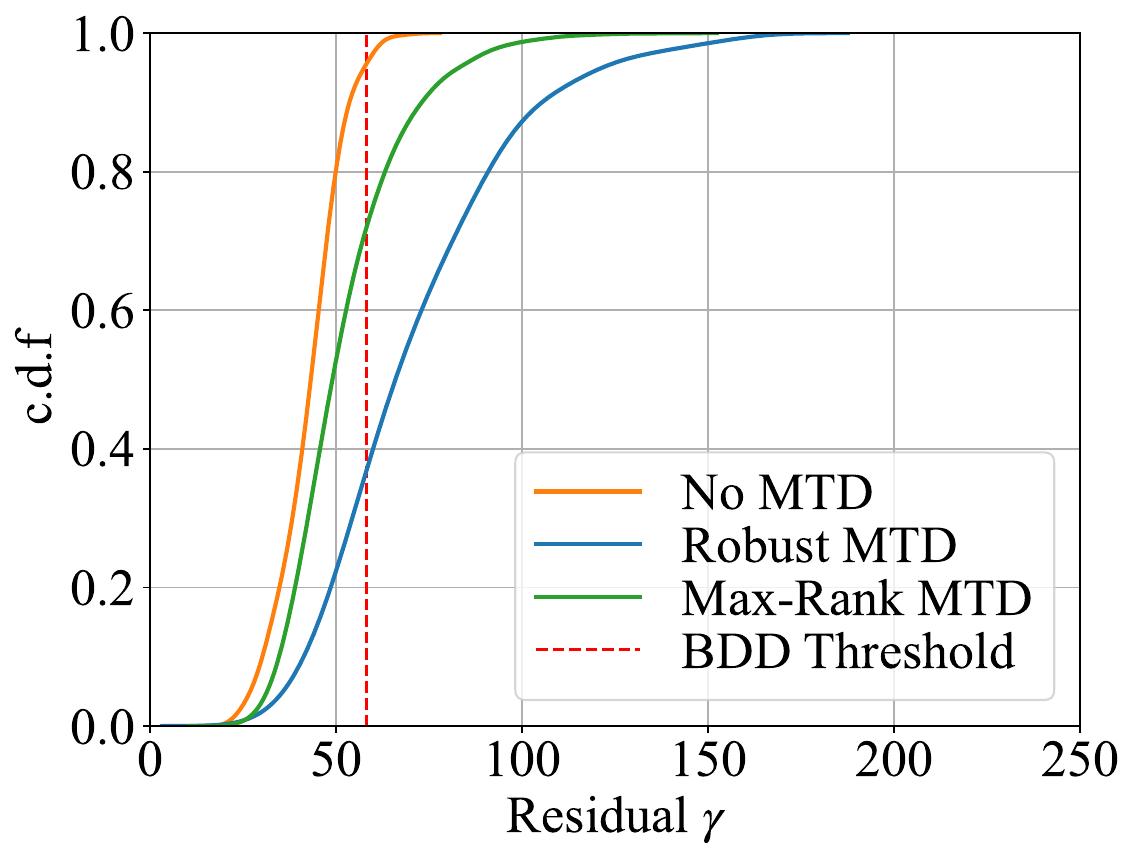}
     \end{subfigure}
     \hfill
     \begin{subfigure}[b]{0.24\textwidth}
         \centering
         \includegraphics[width=\textwidth]{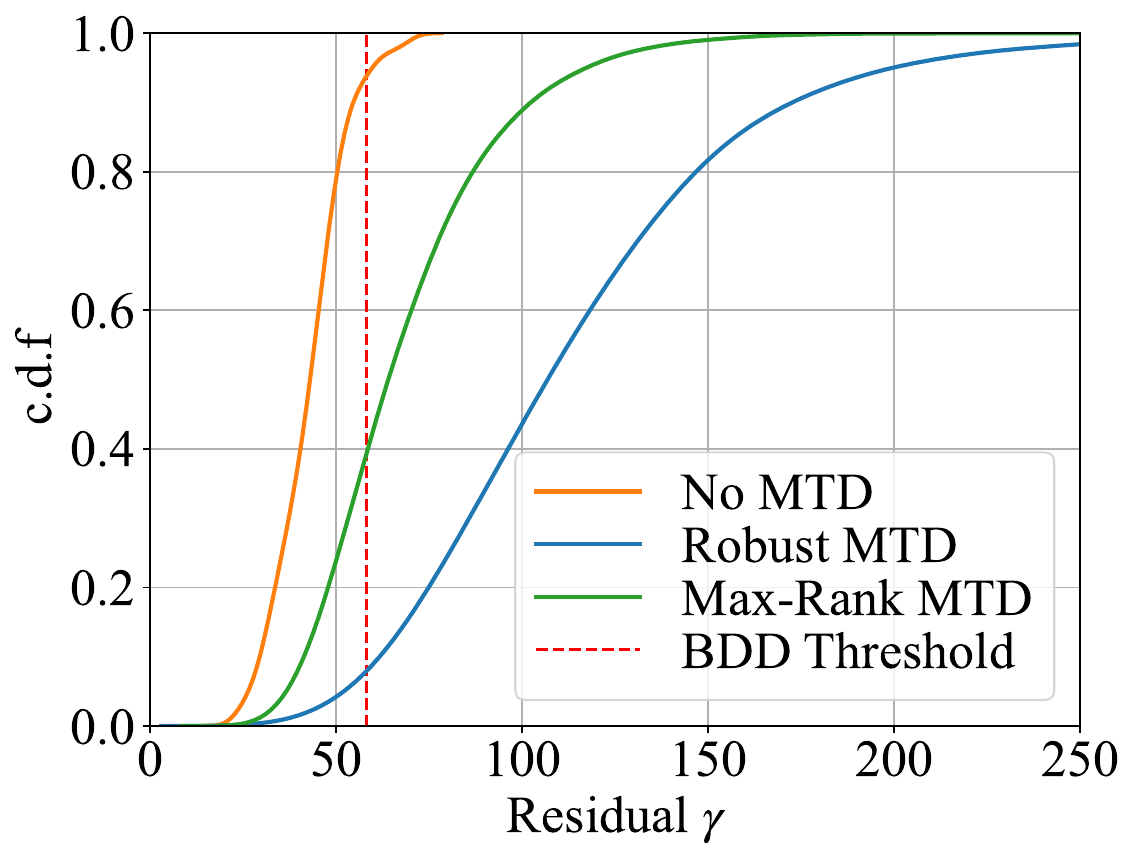}
     \end{subfigure}
     \hfill
     \begin{subfigure}[b]{0.24\textwidth}
         \centering
         \includegraphics[width=\textwidth]{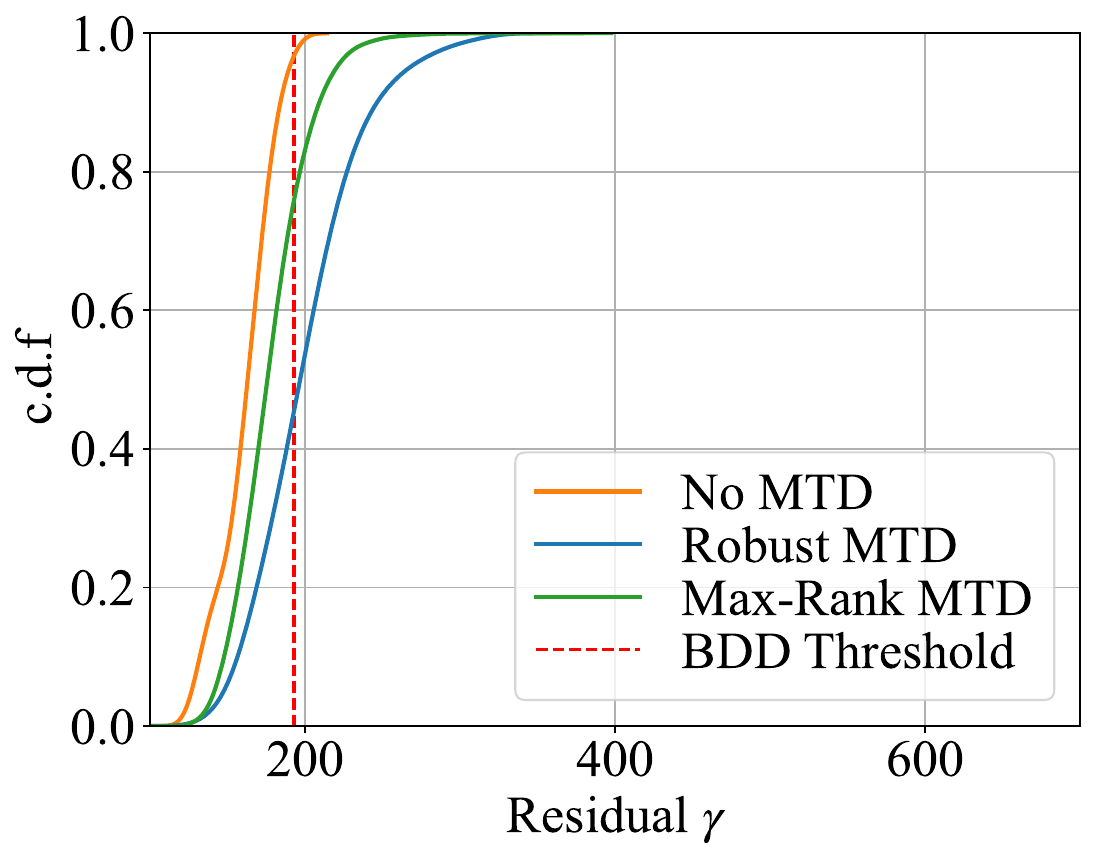}
     \end{subfigure}
     \hfill
     \begin{subfigure}[b]{0.24\textwidth}
         \centering
         \includegraphics[width=\textwidth]{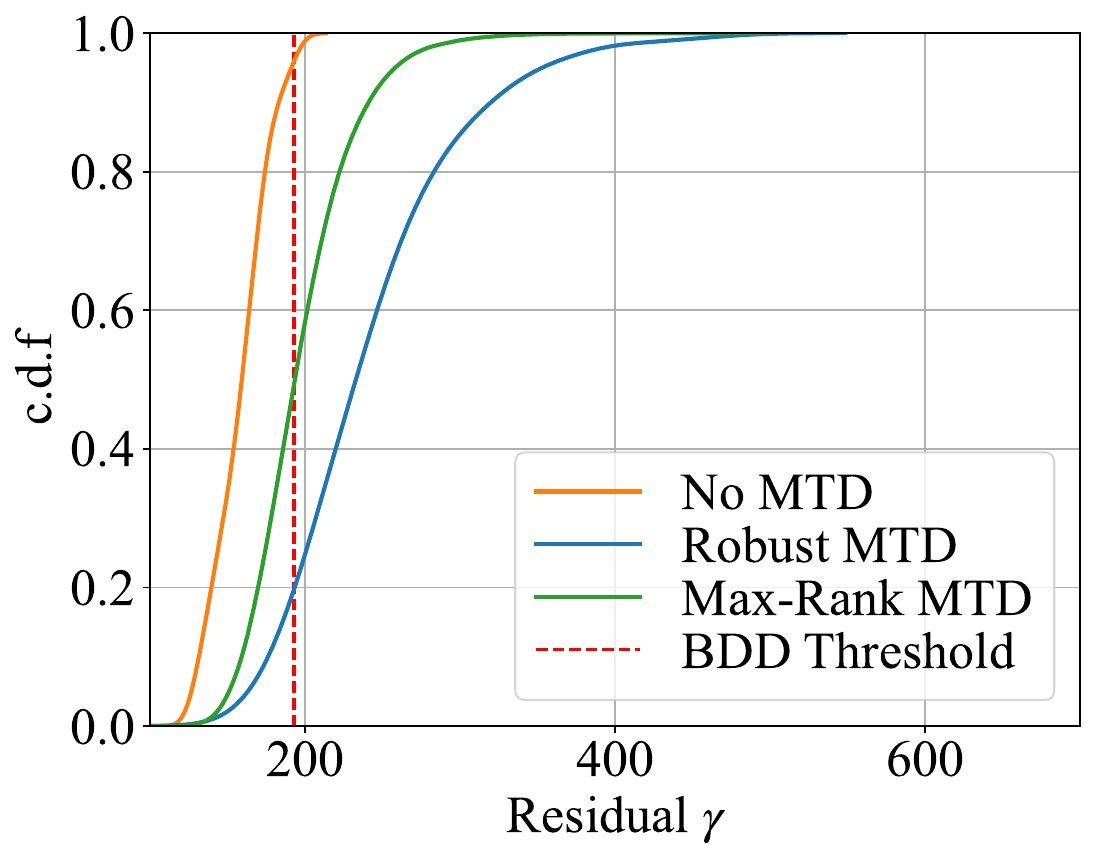}
     \end{subfigure}
        \caption{Residual distributions of AC-FDI attacks. The first row: case-6 system; the second row: case-14 system; the third row: case-57 system; the first column: attacks in range $[10,15)$; the second column: attacks in range $[20,25)$.}
        \label{fig:ac_result}
\end{figure}

\subsubsection{Impact of Different Placements and Perturbation Ratios of D-FACTs Devices}

Fig. \ref{fig:case14_random_compare} records the simulation results on AC random attacks under two different D-FACTS devices placements and four different perturbation ratio limits. In detail, `all' represents perturbing all branches, whereas `part' represents perturbing on branch- 2, 3, 4, 12, 15, 18, and 20, which is the outcome of the `D-FACTS Devices Placement Algorithm' in Appendix \ref{sec:deployment}. The simulation result shows that $k=6$ is achieved and all buses are covered except bus 8 in `part' placement. As the maximum perturbation ratio is reported as 50\% in literature \cite{lakshminarayana2021cost}, $\tau$ is set as 0.2, 0.3, 0.4, and 0.5. As a result, the grey curve in Fig. \ref{fig:case14_random_compare} is simulated in the same settings as the robust MTD in Table \ref{tab:random_ac}. When the number of D-FACTS devices is limited, although the minimum $k$ is still met by Principle 1, the detection rate is significantly reduced. To attain a higher detection rate, the perturbation limit should be further increased. 
Notably, the dependence of ADP on different D-FACTS device placements and perturbation ratios can only be found when the sensor noise is considered.

\begin{figure}
    \centering
    \includegraphics[width = 0.24\textwidth]{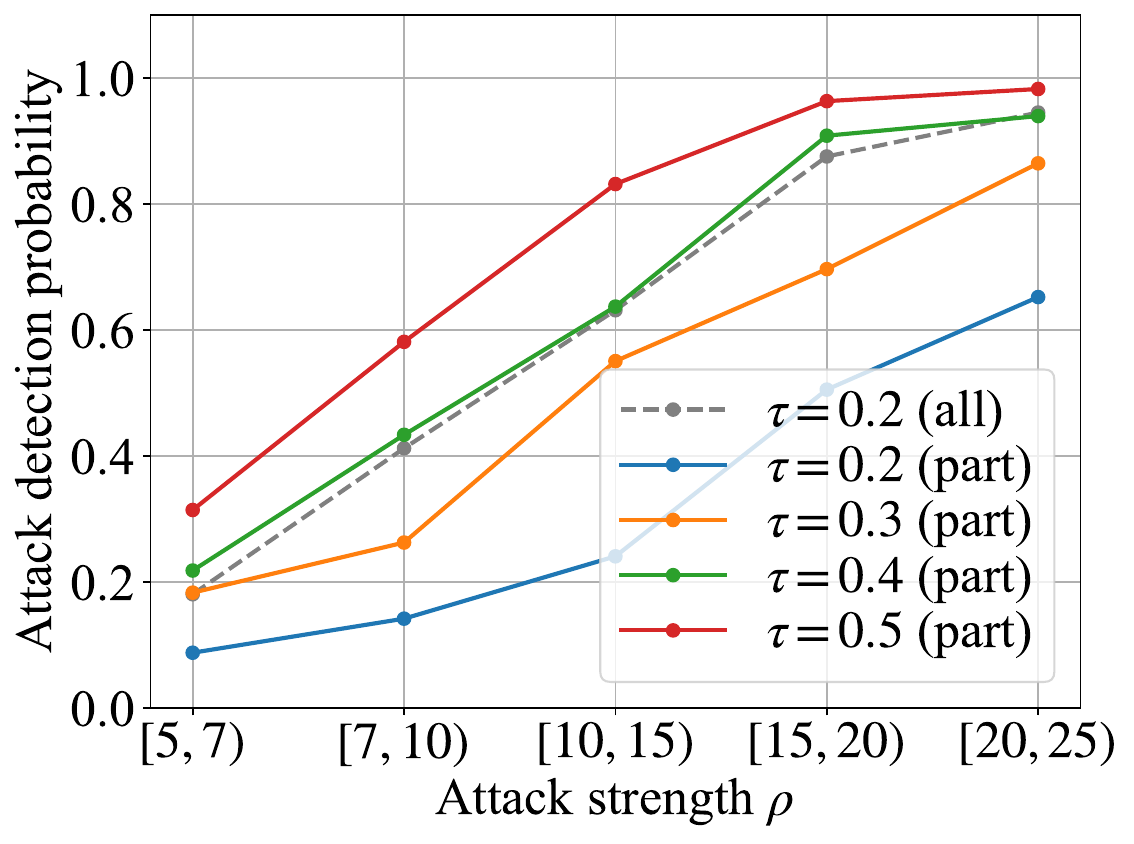}
    \caption{ADPs under different placements and perturbation ratios of D-FACTS devices.}
    \label{fig:case14_random_compare}
\end{figure}

\subsubsection{Computational Time}

The computational time of the proposed algorithms are summarised in Table \ref{tab:time}. We test the proposed algorithm on the MacBook Pro with Apple M1 Pro chip and 32 GB memory. For each system and algorithm, the computational times under all load conditions are recorded and averaged. The multi-run strategy is also applied to approach the global optimum of the nonlinear optimisation problem which is also included in Table \ref{tab:time}. Although the computation time depends on the system scales, number of D-FACTS devices, and algorithms, they are acceptable for real-time applications. In practise, as attackers spend time collecting new measurements and learning new parameters \cite{lakshminarayana2021cost}, the system operator can solve robust MTD algorithms with a period much longer than the state estimation time, e.g., several hours, or only change the Jacobian matrix $\bm{J}_N$ when the loads are significantly changed. A flat state vector may also be a choice to construct the Jacobian matrix if the loads change slowly. 

\begin{table}[h]
    \centering
    \footnotesize
    \caption{Computational Time (averaged by \texttt{no\_load} runs). }
    \begin{tabular}{ c|ccc } 
        \hline
        \textbf{Case} & \textbf{No. D-FACTS}  & \textbf{Algorithm} & \textbf{Time (s)} \\\hline\hline
        case-6 & 11 & \eqref{eq:complete_mtd} & 0.022 \\\hline
        \multirow{3}{4em}{case-14} & 20 & Algorithm \ref{alg:incomplete} & 1.925 \\ 
        & 20 & Algorithm \ref{alg:incomplete} without \eqref{eq:idle_mtd} & 0.325 \\ 
        & 7 & Algorithm \ref{alg:incomplete} & 0.532 \\\hline
        case-57 & 78 & Algorithm \ref{alg:incomplete} & 9.357 \\
        \hline
        \end{tabular}
    \label{tab:time}
\end{table}

\section{Conclusions}

\label{sec:conclusion}

In this paper, we address the real-time robust implementation of MTD against unknown FDI attacks. Using the concept of angles between subspaces, we theoretically prove that the weakest point for any given MTD strategy corresponds to the smallest principal angle and the worst-case detection rate is proportional to the sine of this angle, with the impact of measurement noise being explicitly considered. These novel findings can help evaluate the effectiveness of any MTD strategy. Moreover, a robust MTD algorithm is proposed by increasing the worst-case detection rate for the grid with complete MTD configuration. We then demonstrate that the weakest point(s) of incomplete MTD always exist and cannot be improved. Therefore, robust MTD is proposed for the grid with incomplete configuration by refraining from the ineffective MTD operation and improving the worst-case detection rate in the detectable subspace. The simulation results on standard IEEE benchmarks verify the effectiveness of real-time detection in AC-FDI attacks, compared with the baseline. 
In the future, we would like to cooperate the proposed robust MTD algorithm with hiddenness being considered. Meanwhile, a constrained optimisation problem can also be derived to minimise the usage of D-FACTS devices.

\appendix

\numberwithin{equation}{section}
\setcounter{equation}{0}

\subsection{Proof of Proposition \ref{prop:com_incom}}
\label{sec:proof_com_incom}
The composite matrix of the original and perturbed Jacobian matrix \eqref{eq:jacobian} is written as:
\begin{equation*}
    \begin{pmatrix}
    \bm{J} & \bm{J}'
    \end{pmatrix} = \bm{V}\begin{pmatrix}
    \bm{B} & -\bm{G} & \bm{B}' & -\bm{G}
    \end{pmatrix}\begin{pmatrix} \bm{A}_{r}^{\cos} & \bm{0} \\ \bm{A}_{r}^{\sin} &  \bm{0} \\ \bm{0} & \bm{A}_r^{\cos} \\ \bm{0} & \bm{A}_r^{\sin} \end{pmatrix}
\end{equation*}

Given the property of the matrix product, the rank of the composite matrix satisfies $\text{rank}((\bm{J}\quad\bm{J}'))\leq \min \{m,m,2n\}$. If $m<2n$, $\text{rank}((\bm{J}\quad\bm{J}')) \leq m < 2n$ no matter how the D-FACTS devices are altered. Therefore, the MTD cannot be complete if $m<2n$.

\subsection{Normalised Measurement Vectors and Matrices}
\label{sec:app_normalize}

We consider measurement noise follows independent Gaussian distribution which is not necessarily isotropic. Let $\bm{z}_N = \bm{R}^{-\frac{1}{2}}\bm{z}$, $\bm{e}_N = \bm{R}^{-\frac{1}{2}}\bm{e}$, and $\bm{J}_N = \bm{R}^{-\frac{1}{2}}\bm{J}$. The measurement equation becomes $\bm{z}_N = \bm{J}_N\bm{\theta} + \bm{e}_N$. $\bm{P}_{\bm{J}}$, which is defined on $\langle\,,\rangle_{\bm{R}^{-\frac{1}{2}}}$, now becomes $\bm{P}_{\bm{J}_N} = \bm{J}_N(\bm{J}_N^T\bm{J}_N)^{-1}\bm{J}_N^T$, defined on $\langle\,,\rangle$. Similarly, $\bm{S}_{\bm{J}_N} = \bm{I} - \bm{P}_{\bm{J}_N}$. It is easy to show that $\bm{R}^{-\frac{1}{2}}\bm{S}_{\bm{J}} = \bm{S}_{\bm{J}_N}\bm{R}^{-\frac{1}{2}}$. As a result, $\bm{r}(\bm{z}_N) = \bm{S}_{\bm{J}_N}\bm{e}_N$ follows (approximately) standard normal distribution $\bm{r}(\bm{z}_N)\sim\mathcal{N}(\bm{0},\bm{I})$. For convenience, we write $\bm{P}_{\bm{J}_N}$ and $\bm{S}_{\bm{J}_N}$ as $\bm{P}_N$ and $\bm{S}_N$ in short.

\subsection{Proof of Proposition \ref{prop:min_att}}\label{sec:app_min_att}

First, a $\beta$-MTD has $\|\bm{S}_N'\bm{a}_N\|_2\geq\sqrt{\lambda_c(\beta)}$. The necessary condition then follows from $\|\bm{S}_N'\bm{a}_N\|_2\leq\|\bm{S}_N\|_2\|\bm{a}_N\|_2 = \|\bm{a}_N\|_2$. 

Moreover, as $\bm{a}_N = \bm{R}^{-\frac{1}{2}}\bm{a}$, it also gives $\|\bm{S}_N'\|_2\|\bm{R}^{-\frac{1}{2}}\|_2\|\bm{a}\|_2 = \|\bm{R}^{-\frac{1}{2}}\|_2\|\bm{a}\|_2\geq\sqrt{\lambda_c(\beta)}$. As $\|\bm{R}^{-\frac{1}{2}}\|_2 = \max\sigma(\bm{R}^{-\frac{1}{2}})=\sigma_{min}^{-1}$, it can be derived that $\|\bm{a}\|_2\geq\sigma_{min}\sqrt{\lambda_c{(\beta})}$. Furthermore, if $\bm{R} = \text{diag}([\sigma,\sigma,\cdots,\sigma])$ is isotropic, it gives $\|\bm{R}^{-\frac{1}{2}}\bm{a}\|_2 = \sigma^{-1}\|\bm{a}\|_2\geq\sqrt{\lambda_c(\beta)}$. Let $\rho = \|\bm{a}\|_2/\sqrt{\sum_{i}^m \sigma_i^2}$. We can result in $\rho\geq\sqrt{\lambda_c(\beta)}/\sqrt{m}$.

\subsection{Proof of Proposition \ref{prop:min_eff}}\label{sec:app_proof_1}
According to Definition \ref{def:vul}, the weakest point $\bm{j}_N^*\in\mathcal{J}_N, \|\bm{j}_N^*\|_2 = 1$ can be derived by
\begin{equation}\label{eq:app_1}
    \begin{array}{rl}
        \bm{j}_N^* = & \arg\min_{\bm{j}_N\in \mathcal{J}_N \atop \| \bm{j}_N\|_2 = 1}\sqrt{\lambda_{eff}} \\
        = & \arg\min_{\bm{j}_N\in \mathcal{J}_N \atop \| \bm{j}_N\|_2 = 1} \frac{\|\bm{j}_N - \bm{P}_N'\bm{j}_N\|_2}{\|\bm{j}_N\|_2} \\
        = & \arg\min_{\bm{j}_N\in \mathcal{J}_N \atop \| \bm{j}_N\|_2 = 1} \sin{\angle\{\bm{j}_N,\bm{P}_N'\bm{j}_N\}}
    \end{array}
\end{equation}

Note that the triangle relationship within the sides $\|\bm{j}_N\|$, $\|\bm{P}_N'\bm{j}_N\|$, and $\|\bm{j}_N-\bm{P}_N'\bm{j}_N\|$ and the ratio in \eqref{eq:app_1} is the sine of the angle between the vectors $\bm{j}_N$ and $\bm{P}_N'\bm{j}_N$. Basing on the definition of principal angle \eqref{eq:minimal_angle}, the sine of the angle is minimized when $\angle\{\bm{j}_N,\bm{P}_N'\bm{j}_N\} = \theta_1$. The minimum principal angle is achieved when $\bm{j}_N$ and $\bm{P}_N'\bm{j}_N$ are reciprocal such that $\bm{j}_N = \bm{u}_1$ and $\bm{P}_N'\bm{j}_N = \bm{P}_N'\bm{u}_1 = \cos{\theta_1}\bm{v}_1$ \cite{galantai2008subspaces, ben2003generalized}. 

Moreover, the worst-case detection rate is achieved when attacking on $\bm{u}_1$ such that
\begin{equation*}
    \lambda_{{min}} = \|a\bm{u}_1 - a\cos{\theta_1}\bm{v}_1\|_2^2 = a^2\sin^2{\theta_1}
\end{equation*}

\subsection{D-FACTS Devices Placement}\label{sec:deployment}

A modified minimum edge covering algorithm is proposed to find the smallest number of D-FACTS devices covering all buses while satisfying the minimum $k$ condition. 
The pseudocode is given by Algorithm \ref{alg:mtd}.
In detail, the inputs to the proposed MTD deployment algorithm are the grid information $\mathcal{G}(\mathcal{N},\mathcal{E})$ and the output is branch set $\mathcal{E}_D$. 
On lines 1-2, $\text{CB}$ represents the function to calculate the set of cycle bases of a given graph. The algorithm \ref{alg:mtd} then removes any buses that are not included by cycle basis (thus not in any loops) and the corresponding branches from the grid $\mathcal{G}$. 
In line 3-4, the minimum edge covering (MEC) problem is solved. Given the power grid topology, MEC firstly runs the maximum (cardinality) matching algorithm to find the maximum branch set whose ending buses are not incident to each other \cite{bondy2008graph}. The maximum matching is found by Edmonds’ BLOSSOM algorithm where the size of the initial empty matching is increased iteratively along the so-called augmenting path spotted by blossom contraction \cite{bondy2008graph}. 
After constructing the maximum matching, a greedy algorithm is performed to add any uncovered buses to the maximum matching set. The resulting set of branches becomes $\mathcal{E}_D$, the minimum edge covering set where each bus is connected to at least one branch. Lines 5-15 guarantee the minimum $k$ requirement where it breaks the edge in any identified cycle bases in $\overline{\mathcal{G}}_2$. At last, line 11-13 is added to avoid adding any new loop in $\overline{\mathcal{G}}_1$.

\begin{algorithm}
    \footnotesize
    \SetKwInOut{Input}{Input}
    \SetKwInOut{Output}{Output}

    \Input{grid topology $\mathcal{G}(\mathcal{N},\mathcal{E})$}
    \Output{branch set with D-FACTS devices $\mathcal{E}_D$}
    $\mathcal{L} = \text{CB}(\mathcal{G})$; \tcc{find the circle basis}
    
    Find buses ${\mathcal{N}_1}$ not in $\mathcal{L}$. Remove $\mathcal{N}_1$ and the incident branches from $\mathcal{G}$. Name the resulting graph as $\overline{\mathcal{G}}(\overline{\mathcal{N}},\overline{\mathcal{E}})$\; 

    $\mathcal{E}_{min} = \text{MEC}(\overline{\mathcal{G}})$, construct $\overline{\mathcal{G}}_1(\overline{\mathcal{N}},\mathcal{E}_{min})$ and $\overline{\mathcal{G}}_2(\overline{\mathcal{N}}, \mathcal{E}_r)$ with $\mathcal{E}_r = \overline{\mathcal{E}} \setminus \mathcal{E}_{min} $;
    
    $\mathcal{L}_2 = \text{CB}(\overline{\mathcal{G}}_2)$  \tcc{loops in non D-FACTs graph}
    \For{loop in $\mathcal{L}_2$}
    {
    \For{e in loop}
    {Construct $\overline{\mathcal{G}}_1(\overline{\mathcal{N}}, \mathcal{E}_{min})$ and $\overline{ \mathcal{G}}_2(\overline{\mathcal{N}},\mathcal{E}_r$) where $\mathcal{E}_{min} \leftarrow \mathcal{E}_{min}+e$ and $\mathcal{E}_{r} \leftarrow \mathcal{E}_r-e$\;
    
    $\mathcal{L}_1 = \text{CB}(\overline{\mathcal{G}}_1)$\; \tcc{loops in D-FACTs graph}
    
    \eIf{$\mathcal{L}_1 = \varnothing$}{break}
    {$\overline{\mathcal{G}}_1(\overline{\mathcal{N}}, \mathcal{E}_{min})$ and $\overline{\mathcal{G}}_2(\overline{\mathcal{N}}, \mathcal{E}_r)$ where $\mathcal{E}_{min} \leftarrow \mathcal{E}_{min}-e$ and $\mathcal{E}_{r} \leftarrow \mathcal{E}_r+e$\;} 
    }
    }
    
    
    

    \caption{D-FACTS Devices Placement Algorithm}
    \label{alg:mtd}
\end{algorithm}

\subsection{Proof of Lemma \ref{lemma:vul}}

\label{sec:proof_vul}

Rewrite the non-centrality parameter as
\begin{equation}
\label{eq:vul}
\begin{array}{rl}
    \sqrt{\lambda} = & \|(\bm{I} - \bm{V}\bm{V}^T)\bm{Uc}\|_2 \\
    = & \|(\bm{U} - \bm{V}\Gamma)\bm{c}\|_2 \\
    = & \|\left((\bm{U}_1,\bm{U}_{23}) - (\bm{V}_1\Gamma_1,\bm{V}_{23}\Gamma_{23})\right)\bm{c}\|_2
\end{array}
\end{equation}

As $\bm{U}_1=\bm{V}_1$ and $\Gamma_1 = \bm{I}$, \eqref{eq:vul} can be reduced to $\sqrt{\lambda} = (\bm{U}_{23} - \bm{V}_{23}\Gamma_{23})\bm{c}_{23}$ which does not depend on $\bm{c}_1$.

\bibliographystyle{IEEEtran}
\bibliography{IEEEabrv,Reference}


\begin{IEEEbiography}[{\includegraphics[width=1in,height=1.25in,clip,keepaspectratio]{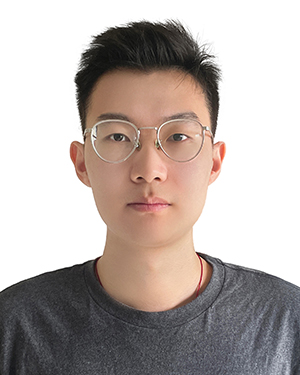}}]{Wangkun Xu} (Student Member, IEEE) received B.Eng. degree in electrical and electronic engineering from Xi-an Jiaotong Liverpool University, China and University of Liverpool, UK, in 2018. He received M.Sc. degree in control systems from Imperial College London, in 2019, where he is currently a Ph.D. student. His research focuses on robust and privacy-preserving machine learnings in power system operation and security.
\end{IEEEbiography}

\begin{IEEEbiography}[{\includegraphics[width=1in,height=1.25in,clip,keepaspectratio]{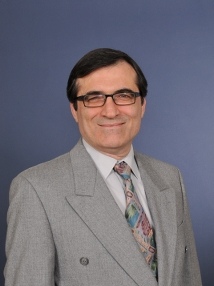}}]{Imad M. Jaimoukha} received the B.Sc. degree in electrical engineering from the University of Southampton, Southampton, U.K., in 1983, and the M.Sc. and Ph.D. degrees in control systems from Imperial College London, London, U.K., in 1986 and 1990, respectively. He was a Research Fellow with the Centre for Process Systems Engineering at ICL from 1990 to 1994. Since 1994, he has been with the Department of Electrical and Electronic Engineering, ICL. His research interests include robust and fault-tolerant control, system approximation, and global optimization.
\end{IEEEbiography}

\begin{IEEEbiography}[{\includegraphics[width=1in,height=1.25in,clip,keepaspectratio]{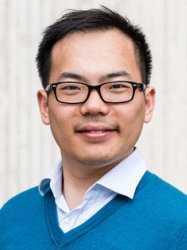}}]{Fei Teng} (Senior Member, IEEE) received the B.Eng. degree in electrical engineering from Beihang University, China, in 2009, and the M.Sc. and Ph.D. degrees in electrical engineering from Imperial College London, U.K., in 2010 and 2015, respectively, where he is currently a Senior Lecturer with the Department of Electrical and Electronic Engineering. His research focuses on the power system operation with high penetration of Inverter-Based Resources (IBRs) and the Cyber-resilient and Privacy-preserving cyber-physical power grid.
\end{IEEEbiography}

\end{document}